

\def\draft{}

\def\sidebysidefigures{}

\def\arXiv{}


\ifx\draft\undefined
	\let\sidebysidefigures=\undefined
	\documentclass[journal, twocolumn, twoside]{IEEEtran}
\else
	\documentclass[journal, draftclsnofoot, 12pt, onecolumn, twoside]{IEEEtran}
\fi


\usepackage{amsmath}
\usepackage{amssymb}
\usepackage{amsthm}
\usepackage{dsfont}\let\mathbb\mathds
\usepackage{wasysym}
\usepackage{graphicx}\graphicspath{{./figs/}}
\usepackage[caption=false]{subfig}
\usepackage{cite}
\usepackage{booktabs}
\usepackage{microtype}
\usepackage[hyphens]{url}

\ifx\arXiv\undefined\else

\usepackage{color}
\definecolor{darkblue}{rgb}{0.0, 0.0, 0.55}
\usepackage{hyperref}
\hypersetup{
    pdfstartview=Fit,
    linktoc=all,
    colorlinks=true,
    linkcolor=darkblue,
    citecolor=darkblue,
    filecolor=darkblue,
    urlcolor=darkblue,
    bookmarksopen=false,
    bookmarksopenlevel=2,
    bookmarksdepth=3,
} 
\usepackage[all]{hypcap}

\let\sidebysidefigures=\undefined

\fi


\theoremstyle{plain}
\newtheorem{lemma}{Lemma}
\newtheorem{theorem}{Theorem}
\newtheorem{proposition}{Proposition}
\newtheorem{corollary}{Corollary}[theorem]
\theoremstyle{definition}
\newtheorem{definition}{Definition}

\newtheorem{assumption}{Assumption}
\theoremstyle{remark}
\newtheorem{remark}{Remark}


\def\nbu{{\mathbf{u}}}
\def\nbv{{\mathbf{v}}}

\def\nbz{{\mathbf{z}}}


\def\nbX{{\mathbf{X}}}

\def\ncalA{{\mathcal{A}}}
\def\ncalB{{\mathcal{B}}}

\def\ncalF{{\mathcal{F}}}

\def\ncalL{{\mathcal{L}}}

\def\ncalN{{\mathcal{N}}}

\def\ncalS{{\mathcal{S}}}

\def\ncalU{{\mathcal{U}}}
\def\ncalV{{\mathcal{V}}}

\def\nbbA{{\mathbb{A}}}

\def\nbbG{{\mathbb{G}}}

\def\nbbN{{\mathbb{N}}}



\def\nrmc{{\rm c}}

\def\nrmn{{\rm n}}


\def\E{\mathbb{E}}
\def\P{\mathbb{P}}
\def\R{\mathbb{R}}
\def\sreq#1{\stackrel{(#1)}{=}}
\def\indicator{{\mathbb{1}}}

\def\d{\mathrm{d}}
\def\T{\top}

\def\th{{^\text{th}}}

\def\pl{\mathtt{P_L}}
\def\plk{\mathtt{P_L^K}}
\def\PPPa{{\Phi}}

\def\pppai{{\lambda}}
\def\PPPb{{\Psi}}

\def\pppbi{{\nu}}

\def\N{{\sigma^2}}
\def\threshold{{\beta}}
\def\SINR{{\sf SINR}}

\def\origin{{o}}

\def\Gammal{{(\ell-1)!}}
\def\GammaK{{(K-1)!}}
\def\da{{\nbu}}
\def\db{{\nbv}}
\def\Alune{{\nbbA_\text{\rm\tiny\leftmoon}}}
\def\seta#1{{\ncalS_{\da}^{[#1]}}}
\def\setb#1{{\ncalS_{\db}^{[#1]}}}
\def\bRL{{\bar{R}_\ell}}
\def\bL{{\sqrt{r^2+d^2-2rd\cos \theta}}}
\def\bLx{{\sqrt{R_{\ell}^2+d^2-2 R_{\ell} d\cos \theta}}}
\def\ptx{x}
\def\dptxtob{\bar{X}}
\def\setdptxtob{{\bar{\nbX}}}
\def\phirange{{\phi_{\text{range}}}}
\def\dK{y}
\def\L{\ncalL}
\def\Lc{\L_{\nrmc}}
\def\Lnc{\L_{\nrmn\nrmc}}

\allowdisplaybreaks
\newlength{\maxfigurewidth}
\setlength{\maxfigurewidth}{3.4935in}
\newlength{\figurewidth}
\ifx\sidebysidefigures\undefined
	\setlength{\figurewidth}{\maxfigurewidth}
\else
	\setlength{\figurewidth}{0.882\maxfigurewidth}
\fi


\begin{document}


\title{A Tractable Analysis of the Improvement in Unique Localizability Through Collaboration}

\author{%
Javier Schloemann, Harpreet S. Dhillon, and R. Michael Buehrer%
\ifx\arXiv\undefined\else%
	\thanks{This work has been submitted to the IEEE for possible publication. Copyright may be transferred without notice, after which this version may no longer be accessible.}%
\fi%
\thanks{The authors are with the Mobile and Portable Radio Research Group (MPRG), Wireless@Virginia Tech, Blacksburg, VA, USA. Email: \{javier, hdhillon, buehrer\}@vt.edu. This paper is submitted in part to IEEE GLOBECOM 2015 Workshop on Localization for Indoors, Outdoors, and Emerging Networks (LION), San Diego, CA, USA~\cite{Schloemann2015e}.%
\ifx\draft\undefined\else%
	\hfill Manuscript last updated: \today.%
\fi%
}%
}


\maketitle

\begin{abstract}

In this paper, we mathematically characterize the improvement in device localizability achieved by allowing collaboration among devices. Depending on the detection sensitivity of the receivers in the devices, it is not unusual for a device to be localized to lack a sufficient number of detectable positioning signals from localized devices to determine its location without ambiguity (i.e., to be uniquely localizable). This occurrence is well-known to be a limiting factor in localization performance, especially in communications systems. In cellular positioning, for example, cellular network designers call this the \emph{hearability problem}. We study the conditions required for unique localizability and use tools from stochastic geometry to derive accurate analytic expressions for the probabilities of meeting these conditions in the noncollaborative and collaborative cases. We consider the scenario without shadowing, the scenario with shadowing and universal frequency reuse, and, finally, the shadowing scenario with random frequency reuse. The results from the latter scenario, which apply particularly to cellular networks, reveal that collaboration between two devices separated by only a short distance yields drastic improvements in both devices' abilities to uniquely determine their positions. The results from this analysis are very promising and motivate delving further into techniques which enhance cellular positioning with small-scale collaborative ranging observations among nearby devices.

\end{abstract}

\begin{IEEEkeywords}\hyphenation{theory}
Collaborative localization, unique localizability, hearability, stochastic geometry, point process theory.
\end{IEEEkeywords}

\section{Introduction}\label{Sec:Introduction}

Determining the locations of devices in mobile ad-hoc networks (MANETs), wireless sensor networks (WSNs), and cellular networks has many important applications. In MANETs, which are useful in disaster recovery, rescue operations, and military communications, location information is used to enable location-aided routing and geodesic packet forwarding~\cite{Karp2000, Ko2000, Blazevic2001, Jain2001, Lee2010, Raji2014}. In WSNs, whose applications include environmental monitoring (e.g., for precision agriculture) and asset tracking in warehouses, not only is location information useful for the self-organization of the network, but in addition, tying locations to the sensor observations is crucial for interpreting the sensed data~\cite{Akyildiz2002, Patwari2005, Gezici2005}. Lastly, in cellular networks, which provide nearly ubiquitous communication capabilities, location information is used to provide subscribers with location-based services in addition to providing public service answering points with potentially life-saving location information during emergency calls~\cite{Sayed2005, Gustafsson2005, FCCE911CFR}.

A seemingly simple solution for providing universal location information in the aforementioned networks is to take advantage of prevalent global navigation satellite systems (GNSS), e.g.,  GPS, GLONASS, and Galileo; however, such systems are not always available or reliable. For example, MANETs and WSNs often consist of large numbers of devices, meaning that it is often not economically viable to equip all devices with GNSS receivers. Instead, it is likely that only a portion of the devices are equipped with GNSS receivers, allowing those devices to locate themselves and then serve as reference points with which the remaining devices communicate in order to then calculate their own locations using some well-established localization technique. On the other hand, GNSS receivers are standard equipment in all new cellular devices; however, these devices are often used indoors where the satellite signals may be too weak to provide reliable location estimates. With the advent of new \emph{indoor} location requirements imposed by the Federal Communications Commission (FCC)~\cite{FCC2015}, it is becoming increasingly imperative for cellular network operators to be able to fall back on accurate terrestrial localization techniques.

Classically, the localization procedure is performed separately at the mobile devices (MDs), each communicating only with a set of already-localized reference devices, which the literature commonly calls \emph{anchors} or \emph{beacons}. In cellular networks, \emph{base stations} (BSs) serve as the reference devices, and this is the term we will use since cellular positioning is a primary focus of this work. Now, the first objective in any location system is to make sure that the device to be located can receive positioning signals from a sufficient number of BSs in order to calculate a position fix. This is far from guaranteed, something to which cellular network designers, who call this the \emph{hearability problem}, will attest~\cite{R1-091912}. Lately, collaboration between MDs has received more and more attention as a means to improve positioning performance, both for MANETs and WSNs~\cite{Savvides2001, Gezici2005, Patwari2005, Alsindi2009a, Shen2010b} as well as, more recently, for traditional cellular networks~\cite{Vaghefi2014a}. The primary benefits provided by collaboration are (i) an increased ability to calculate position fixes~\cite{Wymeersch2009} and (ii) more accurate location estimates~\cite{Schloemann2015b}. The objective of this paper is to study the former benefit; more specifically, we ask: how does enhancing the classical localization procedure with a single collaborative link impact the availability of position fixes?

\subsection{Prior art and motivation}

There is a rich body of literature concerning the study of MANETs and WSNs, both for connectivity as well as positioning. The former is important because without connectivity, there is no communication, ultimately rendering the network ineffective. A significant portion of previous work deals with \emph{percolation-based connectivity}~\cite{Dousse2002, Bettstetter2005, Santi2005, Ren2011}, which studies how system parameters affect the ability to obtain an infinite connected component in a network of randomly-distributed devices. It is standard practice to model device locations according to a homogeneous Poisson point process (PPP) since MANETs and WSNs exhibit no concrete backbone structure and deployments are typically not rigidly planned. From the localization perspective, a substantial portion of work is focused on \emph{collaborative} or \emph{cooperative} localization, whereby MDs gather position-related observations amongst themselves in addition to their observations from BSs and simultaneously estimate their locations. While analysis is known to be hard and the majority of papers eventually resort to simulation-based insights (e.g., see \cite{Rekleitis2002, Ihler2005, Alsindi2006, Wymeersch2008}), the results of~\cite{Schloemann2015b} can be used to show that enhancing a classical localization procedure with collaboration is beneficial to positioning accuracy. Specifically, a device making just a single collaborative connection to a secondary device meeting some minimum BS connectivity condition enjoys a strict reduction in the \emph{Cram\'{e}r-Rao lower bound} (CRLB) of its positioning error. However, one thing a CRLB analysis does not do is take into account errors due to ambiguities (such as the \emph{flip ambiguity}~\cite{Eren2004}) since the Fisher information calculations consider only the \emph{peakedness} of the log-likelihood functions of the observations in the vicinity of the true device locations (even though the log-likelihood may be the same for many other locations). Thus, as a complement to \cite{Schloemann2015b}, the present work explicitly considers the value of collaboration as a means for taking a device from not being able to locate itself to where it's able to locate itself without ambiguity, i.e., to localizability.

Much the same as in MANETs and WSNs, the cellular network literature is very rich with connectivity analyses (in the form of \emph{coverage probabilities}), though in contrast, it is not quite as rich in terms of localization studies. Primarily, this is due to the fact that location information is nowhere near as crucial to cellular networks as it is to, say, WSNs, where location tagging adds meaning to the collected data. With the proliferation of smart phones enabling location-based services as well as increased pressure due to federal regulations such as the FCC E911 mandate~\cite{FCCE911CFR}, however, positioning of cellular devices has recently garnered increasing interest. On the whole, the vast majority of both coverage and positioning studies have relied on simulation-based results. This is largely due to the fact that cellular networks are often modeled using widely-accepted grid-based models which do not lend themselves to tractable analyses. An example which is related to the present work is~\cite{Vaghefi2014a}, which used simulation results to show that collaboration provides an increase in the availability of positioning fixes in LTE networks using the OTDOA handset-based positioning method. Recently, several landmark coverage papers have appeared which model cellular BS deployments according to a homogeneous PPP (e.g.,~\cite{Andrews2011, Dhillon2012}), essentially arguing that the popular grid-based models are themselves highly-idealized and becoming more obsolete as cellular networks deviate from centrally-planned macro-cell networks to networks which include an increasing number of more arbitrary small-cell deployments such as picocells and femtocells. This same approach was taken in~\cite{Schloemann2015c}, which analytically studied the hearability of far away BSs for the purposes of cellular positioning after showing that there is a coupling between hearability and a cellular network operator's ability to meet specific location accuracy requirements (e.g., the FCC E911 requirements). Among other things, the results in~\cite{Schloemann2015c} can be used to determine the probability that a device will be unable to locate itself without ambiguity when only using positioning observations from BSs, i.e., noncollaborative localization. The present work expands on this and studies how much a collaborative link helps improve a device's ability to locate itself without ambiguity.

We consider a classical localization procedure extended with a single collaborative link, resulting in an estimation problem which consists of two unknown device locations which are then estimated simultaneously. While more complex than the original noncollaborative estimation problem, the addition of a collaborative link may allow a MD which is otherwise not able to locate itself using a classical procedure to be able to uniquely determine its location. Understanding exactly how often collaboration helps in this regard is the purpose of this study. In accordance with previous works, BSs are modeled according to a homogeneous PPP. The MDs are modeled similarly, using a second (independent) PPP, which agrees with the uniformly-random modeling of MDs in MANETs, WSNs, 3GPP simulations, as well as the cellular literature~\cite{Novlan2013}.


\subsection{Contributions}
This paper makes several contributions to the study of collaborative localization. The main contributions of this paper are as follows.\vspace{0.5em}

\noindent \emph{Derivation of unique localizability conditions for two-device collaborative localization}: In Section~\ref{P2:Sec:UniqueLoc}, we employ a graph-theoretic approach to derive the conditions required for a MD's location to be determined without ambiguity for the small-scale (specifically, two device) collaborative localization setup using ranging observations. Reasonable conditions required to guarantee unique localizability using range-difference observations to BSs are also presented.\vspace{0.5em}

\noindent \emph{Analytic expressions for the probability of unique localizability with collaboration}: For the scenario without shadowing, we provide an exact analysis of the probability that a device hearing a certain number of BSs and collaborating with a secondary device hearing a certain number of BSs will go from not being uniquely localizable to being uniquely localizable. We then combine these results with our previous work on hearability and provide accurate approximations for the probability of unique localizability with and without collaboration when using range-based and range-difference-based observations from BSs. The resulting expressions account for network self-interference, something which is often omitted from MANET and WSN studies. Furthermore, we also present accurate approximations of these same probabilities for the scenario with shadowing, which is included in the vast majority of cellular propagation models.\vspace{0.5em}

\noindent \emph{Insights into the expected gains due to collaboration}: Lastly, we present results which shed light on the factors affecting the value of collaboration for improving unique localizability. We observe that in the absence of shadowing, it is the separation between collaborators which dictates the benefit received from collaboration. However, when shadowing is present, the dependence of the benefit on the separation is significantly reduced, although the best case gains are very similar. The value of collaboration is then considered with shadowing \emph{and} frequency reuse, and the results show that the gains due to collaboration are drastically improved. This demonstrates that inter-device collaboration (e.g., using device-to-device (D2D) communication in LTE) is potentially very powerful in cellular networks.

\section{System Model}\label{Sec:SystemModel}
We now formally describe the system model. The key notation presented in this section and used throughout this work is summarized in Table~\ref{Table:Notation}.

\begin{table}
\centering
\caption{Summary of Key Notation}
\label{Table:Notation}
\begin{tabular}{@{}c@{\hspace{2em}}l@{}}
\toprule
\textbf{Notation} & \textbf{Description} \\
\midrule
$\alpha$ & Path loss exponent ($\alpha > 2$) \\
$\Vert \nbz \Vert$ & $\ell_2$-norm of vector $\nbz$ \\
$\origin$ & Origin (location of the typical user) \\
$\PPPa$/$\PPPb$ & PPP of BS/MD locations (independent)\\
$\pppai$/$\pppbi$ & Density of $\PPPa$/$\PPPb$ \\
$\threshold$ & Target SINR \\
$\ncalA \backslash \ncalB$ & The set (or area) $\ncalA$ excluding $\ncalB$ \\
$\vert \ncalA \vert$ & The Lebesgue measure of region $\ncalA$ \\
$\nbbG = (V,E)$ & The graph $\nbbG$ consisting of vertices $V$ and edges $E$ \\
$\ncalS' \subseteq \ncalS$ & $\ncalS'$ is a subset of set $\ncalS$ \\
$\ncalS' \subsetneq \ncalS$ & $\ncalS'$ is a proper (or strict) subset of set $\ncalS$ \\
$\nbbN_\nbz$ & The number of hearable BSs at MD $\nbz$ \\
$\ncalS_{\nbz}^{[\ell]}$ & The set of $\ell$ BSs with highest received SINRs at MD $\nbz$ \\
$\vert \ncalS \vert$ & The cardinality of set $\ncalS$ \\
$\indicator(\cdot)$ & Indicator function, 1/0 when its argument is true/false \\
\bottomrule
\end{tabular}
\end{table}

\subsection{Base station and mobile device locations}

The locations of the BSs and MDs are modeled using two independent homogeneous PPPs $\PPPa,~\PPPb~\in~\R^2$ with densities $\pppai,~\pppbi$~\cite{Stoyan1995, Haenggi2013}, respectively. If the interference is treated as noise at the receiver, the most appropriate metric that captures link quality is the signal-to-interference-plus-noise ratio (SINR). For the link from some BS $x \in \PPPa$ to some MD $z \in \PPPb$, the SINR can be expressed as:
\begin{equation}
\SINR_{x \to z} = \frac{P \ncalF_{x \to z} \|x - z\|^{-\alpha}}{\sum_{\substack{y \in \PPPa\\y \neq x}} P \ncalF_{y \to z} \|y - z\|^{-\alpha} + \N},
\label{Eq:P2:SINR}
\end{equation}
where $P$ is the transmit power, $\ncalF_{g \to h}$ denotes the slow fading coefficient due to shadowing affecting the signal from BS $g$ to MD $h$, $\alpha > 2$ is the pathloss exponent, and $\N$ is the noise variance at the receiver.

In order to improve the hearability of far away BSs, positioning systems typically have to work at lower SINRs than communications systems, thereby necessitating the need for some form of processing gain, which will depend upon system parameters such as the signal integration time. As a side effect, the processing gain techniques employed are assumed to average out the effect of small-scale fading. Thus, the SINR expression in \eqref{Eq:P2:SINR} does not contain a fast fading term, which is consistent with common models for evaluating MANET, WSN, and even cellular positioning performance~\cite{R1-091443}.

\subsection{Base station participation}\label{Sec:BSSelection}

For localization, it is well-known that including an increasing number of BSs in the localization procedure results in a general improvement in positioning accuracy. Thus, we assume that for purposes of localization, a device will take advantage of as many BSs as it can successfully detect (or \emph{hear}), i.e., all BSs whose signals arrive with some minimum link quality. Specifically, a MD $z$ includes a BS $x$ in its localization procedure when
\begin{equation}
\SINR_{x \to z} \geq \threshold,
\end{equation}
where $\threshold$ is the SINR threshold (prior to any processing gain) above which the signals from the BSs must arrive in order for them to successfully contribute to the localization procedure (i.e., this is the \emph{hearability condition}). Note that in the presence of shadowing, the set of included BSs at $z$ will not necessarily correspond to the set of BSs which are geographically closest to $z$.

\subsection{The collaboration-extended localization procedure}

In this paper, we consider the impact of extending a classical location estimation procedure with a single collaborative ranging observation. Formally, we define a classical localization procedure as one where an unlocalized device communicates only with BSs, gathers position-related observations (e.g., RSS, TOA, or TDOA), and solves the resulting single-location estimation problem to determine its location. For MANETs and WSNs, this type of setup corresponds to that employed in \cite{Daneshgaran2007}. For cellular networks, this corresponds to any downlink positioning method, e.g., OTDOA~\cite{Fischer2014}. 

\section{Unique Localizability}\label{P2:Sec:UniqueLoc}

The first objective in any localization system is to make sure that the devices to be located are \emph{uniquely localizable}~\cite{Eren2004, Goldenberg2005}, which is defined next.

\begin{definition}[Unique device localizability]\label{Def:udl}
A mobile device is uniquely localizable if an estimate of the device's location can be found without ambiguity. In the noiseless case, this means that there can only be one solution to the set of non-linear equations that relate the observations to the unknown position. In the noisy case, this means that there is a single global minimum to the appropriate cost function.
\end{definition} 

For classical positioning techniques based on observations between the MD and the BSs, it is widely-accepted that the unique localizability condition simplifies to whether or not a mobile device is able to hear a sufficient number of BSs. Conventional minimum values on the number of BSs required to guarantee the presence of an unambiguous solution to the localization problem in the $\R^2$ plane are 2, 3, and 4 for triangulation (e.g., AOA), trilateration (e.g., TOA and RSS), and multilateration (e.g., TDOA) techniques, respectively. The unique device localizability conditions are much less straightforward for collaborative networks and require a topological analysis of the network as a whole~\cite{Yang2012}. Naturally, researchers have also been interested in the conditions required for an entire network to be uniquely localizable.

\begin{definition}[Unique network localizability]
A network is uniquely localizable when \emph{all} of the devices within the network are uniquely localizable.
\end{definition}

Interestingly and perhaps counterintuitively, the necessary and sufficient conditions for \emph{all} devices in a network to be uniquely localizable using range-based observations were found prior to the corresponding conditions for the \emph{individual} devices. Consider a network consisting of $C$ collaborating MDs and $B \geq 3$ unique noncollinear BSs to which the MDs are connected. Let the grounded network graph\footnote{A grounded network graph differs from a traditional network graph in that additional edges are introduced between all pairs of immovable vertices (i.e., BSs) in order to reflect the rigidity among these vertices.} $\nbbG = (V,E)$ be the graph whose vertices $V$ correspond to the $N = B+C$ network nodes and whose edges $E$ represent all wirelessly-connected pairs (BS~$\rightarrow$~MD and MD~$\leftrightarrow$~MD) as well as all BS pairs. Now, the necessary and sufficient conditions for unique network localizability with perfect ranging observations in $\R^2$ are as follows~\cite{Goldenberg2005}:

\begin{enumerate}
\item[\bf C1]\label{Cond:1} (Rigidity) The grounded graph $\nbbG$ must contain a total of $2N-3$ independent edges. Using Laman's condition \cite{Laman2002}, we can restate this condition as follows: there must exist some graph $\nbbG' = (V, E' \subseteq E)$ where $\vert E' \vert = 2N-3$ for which there are no subgraphs $\nbbG'' = (V'' \subseteq V, E'' \subseteq E')$ where $\vert E'' \vert > 2\vert V'' \vert-3$ edges.
\item[\bf C2]\label{Cond:2} (Triconnectedness) Every vertex $V$ in $\nbbG$ must be the endpoint of at least 3 edges in $E$.
\item[\bf C3]\label{Cond:3} (Reduntant rigidity) If any edge of $\nbbG$ is removed, the ensuing graph must remain rigid. In other words, all subgraphs $\nbbG' = (V, E' \subsetneq E)$ where $\vert E' \vert = \vert E \vert - 1$ must satisfy Condition C1 above.
\end{enumerate}

First, Condition~C1 removes \emph{graph flexibility}, which is defined as the ability to continuously vary the node locations while still satisfying all edge length constraints from the ranging observations. Next, Condition~C2 removes the possibility of pairs of devices being reflected in such a way as to still satisfy all edge length constraints (i.e., \emph{flip ambiguities}). Lastly, Condition~C3 removes the possibility of \emph{flex ambiguities}, i.e., that upon the removal of an edge, the graph loses its rigidity and becomes flexible allowing nodes to be repositioned in such a way which again satisfies all edge length constraints (including the removed edge constraint, which can then be reinserted).

\begin{remark}
As mentioned in~\cite{Goldenberg2005}, the above conditions provide (i) a generic characterization of unique localizability and (ii) assume error-free ranging observations. Regarding (i), the conditions hold for almost all configurations of network devices placed using our PPP models, since the randomization causes degenerate configurations to appear with zero probability in a continuous space~\cite{Goldenberg2005}. Regarding (ii), we note that this is essentially required in order to derive these graph-theoretic localizability conditions. While errors \emph{may} introduce degenerate cases, we will assume that they do not introduce additional global minima into the cost function.
\end{remark}

For this specific study, focusing on how the collaboration of one MD with another MD impacts unique localizability, we can use the above conditions for unique network localizability to obtain the necessary and sufficient conditions for unique device localizability (which we will henceforth also just call \emph{localizability}) using ranging observations. These conditions are formally presented in the following proposition.

\begin{proposition}[Two-device collaborative localizability using ranging observations]\label{Prop:P2:RangingConditions}
Using ranging observations, a device $\da$ capable of collaborating with a second device $\db$ is \emph{uniquely localizable} in $\R^2$ iff one of the following conditions is met:
\begin{enumerate}
\item[\bf L1] $\da$ is directly connected to at least three BSs
\item[\bf L2] $\da$ is directly connected to two BSs, $\db$ is directly connected to \emph{at least} two BSs, and combined, $\da$ and $\db$ are connected to at least three \emph{unique} noncollinear BSs.
\end{enumerate}
\end{proposition}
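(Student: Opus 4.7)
The plan is to specialize the necessary and sufficient conditions C1--C3 for unique network localizability to the two-device grounded graph induced by $\da$, $\db$, and their directly-connected BSs, thereby reducing the proof to a finite case analysis. Sufficiency is obtained by exhibiting, for each configuration permitted by L1 or L2, a generically unique realization of the graph; necessity is obtained by showing that any configuration violating both L1 and L2 admits a second realization satisfying every range constraint. Throughout, I would invoke the remark from the excerpt that, under the PPP model, the genericness hypothesis of~\cite{Goldenberg2005} holds almost surely, so degenerate coincidences (e.g., accidental collinearity of three BSs, or a MD lying exactly on a line through two BSs) occur with probability zero.

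For L1, classical trilateration suffices: $\da$'s three range circles meet in a single point generically, with no appeal to $\db$ needed. For L2, let $b_1, b_2$ be the BSs connected to $\da$ and let $\ell_a$ denote the line through them; then $\da$'s two range constraints determine $\da$ up to reflection across $\ell_a$, yielding two candidates $\da, \da'$. The argument then splits on $\db$'s BS-degree. If $\db$ sees three or more BSs, $\db$ itself is uniquely localized by the L1 argument applied to $\db$, and the collaborative range then generically distinguishes $\da$ from $\da'$. If $\db$ sees exactly two BSs $b_i, b_j$ (possibly sharing one with $\da$, or introducing two fresh BSs), an analogous reflection across $\ell_b = \overline{b_i b_j}$ produces candidates $\db, \db'$, and there are four candidate pairs to consider. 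The collaborative range generically excludes $(\da, \db')$ and $(\da', \db)$, since each would force $\da$ or $\db$ to lie on the opposite reflection line. The last pair $(\da', \db')$ is where the noncollinearity hypothesis enters: the composition of reflections across $\ell_a$ and $\ell_b$ is a nontrivial rotation about $\ell_a \cap \ell_b$ when $\ell_a \neq \ell_b$, which generically changes the pairwise distance, but collapses to the identity when the three combined BSs lie on a common line, in which case $(\da', \db')$ also satisfies every range and unique localizability fails. The subcase with four combined BSs (no sharing) is strictly easier and follows a fortiori.

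For necessity, suppose neither L1 nor L2 holds. If $\da$ sees at most one BS, then even with $\db$ treated as known $\da$ lies on the intersection of at most two circles, producing either a one-parameter family or two discrete candidates. If $\da$ sees exactly two BSs but $\db$ sees fewer than two BSs, then $\db$ is unconstrained on a circle about $\da$ and cannot resolve $\da$'s flip across $\ell_a$. Finally, if both MDs see at least two BSs but their union has at most two unique BSs or three collinear BSs, then the common reflection across the shared BS line maps the true configuration to a distinct one preserving every pairwise distance---including the collaborative range---so unique localizability fails.

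The main obstacle will be the final subcase of L2 sufficiency, where careful tracking of the composition of two reflections is needed to identify exactly when $(\da', \db')$ reproduces the correct collaborative distance. All other subcases reduce cleanly to the standard two-circle-intersection picture, but the collinearity clause requires the explicit reflection calculation and is precisely the place where the noncollinearity hypothesis in L2 is both used for sufficiency and, via the necessity direction, shown to be tight.
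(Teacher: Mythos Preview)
Your approach differs substantially from the paper's. The paper stays inside the graph-rigidity framework throughout: for sufficiency it verifies Conditions C1--C3 on the two minimal grounded graphs of Figure~\ref{Fig:L2scenarios}, using the $\da\leftrightarrow\db$ symmetry of those graphs to identify device localizability with network localizability; for necessity it invokes the failure of triconnectedness (C2) when $\da$ sees fewer than two BSs and the RR3P criterion of~\cite{Yang2012} when $\da$ sees two but $\db$ sees fewer than two. You instead argue directly in Euclidean geometry via reflections and circle intersections. Your sufficiency analysis is correct and more self-contained than the paper's, and the rotation-versus-identity dichotomy for the composition $R_{\ell_b}\!\circ R_{\ell_a}$ is exactly the right mechanism for isolating where the noncollinearity hypothesis enters.

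The gap is in your necessity argument for the subcase where $\da$ sees exactly two BSs and $\db$ sees exactly one BS $c$. You write that ``$\db$ is unconstrained on a circle about $\da$ and cannot resolve $\da$'s flip across $\ell_a$,'' but $\db$ is also tied to $c$: whether the flipped candidate $\da'$ admits any partner $\db''$ with $\lvert \da'-\db''\rvert=d$ and $\lvert c-\db''\rvert=r_c$ is an \emph{inequality} on the configuration (the circles about $\da'$ and $c$ must meet), not an identity. On an open set of placements---for instance with $c$ well away from $\ell_a$ on the same side as $\da$---those circles are disjoint, $\da'$ has no valid partner, and $\da$ is in fact uniquely determined even though both L1 and L2 fail. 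Your reflection construction therefore does not manufacture a second realization in this subcase. The paper does not attempt a geometric construction here either; it simply cites the RR3P necessary condition from~\cite{Yang2012}. If you want to keep the proof self-contained you need a different ambiguity construction for this case, or else follow the paper and defer to the literature.
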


\begin{figure}
\centering
\subfloat[Three unique BS example.]{\label{Fig:ThreeBeaconScenario}\includegraphics[width=.8\figurewidth]{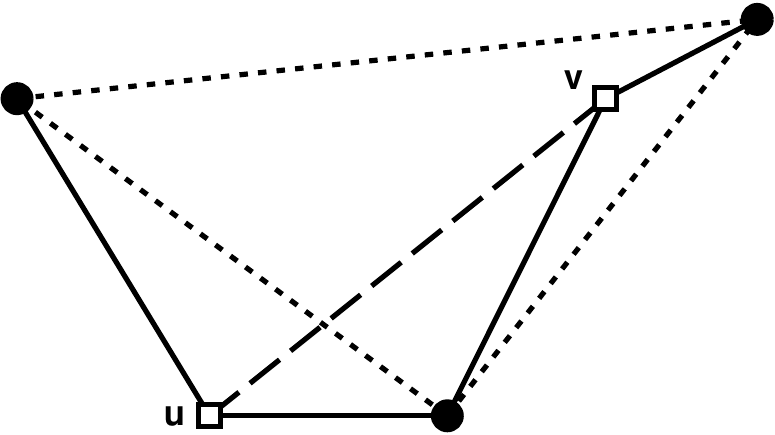}} \\
\subfloat[Four unique BS example.]{\label{Fig:FourBeaconScenario}\includegraphics[width=.7\figurewidth]{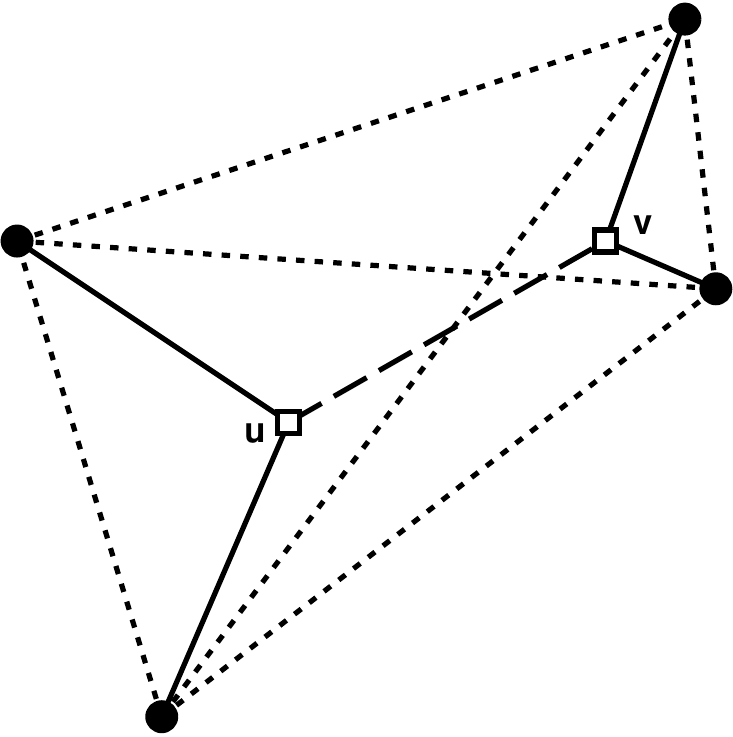}}
\caption{The grounded graphs for the smallest scenarios satisfying Condition~L2, with both MDs (represented by the hollow squares) connected to exactly two BSs (represented by the filled circles) and (a) three and (b) four unique BSs combined. The solid lines represent the BS~$\rightarrow$~MD links, the dashed line is the collaborative MD~$\leftrightarrow$ MD~link, and the dotted lines are not actual wireless links, but instead represent the BS to BS edges included in grounded graphs.}
\label{Fig:L2scenarios}
\end{figure}


\begin{proof}
For Condition~L1, consider the smallest scenario which satisfies the condition, i.e., where a lone MD $\da$ is connected to exactly three BSs. It is a trivial exercise to show that this scenario meets Conditions~C1-C3 for network localizability, which is equivalent to unique device localizability (Definition~\ref{Def:udl}) since $\da$ is the only MD in the network. Adding connections from $\da$ to additional BSs cannot violate any of these conditions. For Condition~L2, consider the smallest scenarios which satisfy the condition, i.e., where both MDs are connected to \emph{exactly} two BSs. There are two possible connectivity scenarios, one with three and one with four unique BSs, as illustrated in Figure~\ref{Fig:L2scenarios}. First, note that the graphs are symmetric about the collaborative link, i.e., $\da$ and $\db$ may be arbitrarily switched without a change in the labeled grounded graph. Thus, since the two devices exhibit a symmetry in terms of graph connectivity, it is not possible to deem one localizable without deeming the other so as well, which implies that device localizability is equivalent to network localizability. From here, it is not difficult to verify that both scenarios in Figure~\ref{Fig:L2scenarios} satisfy Conditions~C1-C3 above. Adding additional BSs connected to $\db$ cannot violate these conditions. Since Condition~L2 leads to network localizability, it clearly also leads to device $\da$'s unique localizability. Now, if $\da$ is connected to fewer than two BSs, it will always be subject to a flip ambiguity due to its lack of triconnectedness (Condition~C2). Lastly, if $\da$ is connected to two BSs and $\db$ is connected to fewer than two BSs, then $\da$ does not belong to a redundantly rigid graph component that includes three vertex-disjoint paths to three BSs,\!\footnote{See the RR3P condition in~\cite{Yang2012}.} meaning that it will again not be uniquely localizable. Thus, we have arrived at the necessary and sufficient localizability conditions for device $\da$ in our problem setup when using range-based observations.
\end{proof}

\begin{remark}[Noncollaborative localizability using ranging observations]\label{Remark:NC_TOA}
Note that Condition~L1 above, which does not involve the secondary device $\db$, is the only condition for (and is thus the definition of) \emph{noncollaborative unique localizability} when using ranging observations. This is easily verified by applying Conditions~C1-C3 to any network with $C=1$ MD and $B \geq 3$ noncollinear BSs.
\end{remark}

\hyphenation{OTDOA}
Until now, all of the previous discussion on localizability has been specific to range-based observations. Since cellular positioning is often accomplished using range-difference observations (e.g., OTDOA in LTE), we are also particularly interested in this setup. Since no localizability conditions exist yet for this setup, we will use the following assumption.

\begin{assumption}[Two-device collaborative localizability using range-difference observations]\label{Assumption:P2:RangeDifferenceConditions}
Using range-difference observations from BS signals, a device $\da$ capable of obtaining a range observation from a second device $\db$ is \emph{uniquely localizable} in $\R^2$ iff one of the following conditions is met:
\begin{enumerate}
\item[\bf D1] $\da$ is directly connected to at least four BSs~\cite{Buehrer12}
\item[\bf D2] $\da$ is directly connected to three BSs, $\db$ is directly connected to \emph{at least} three BSs, and combined, $\da$ and $\db$ are connected to at least four \emph{unique} noncollinear BSs.
\end{enumerate}
\end{assumption}

\begin{remark}[Noncollaborative localizability using range-difference observations]
Paralleling Remark~\ref{Remark:NC_TOA}, Condition~D1 is the well-known condition required to guarantee noncollaborative localizability when using range-difference observations~\cite{Buehrer12}.
\end{remark}


\section{Improvement in Localizability through Collaboration}\label{Sec:P2:LocImprovement}

Using the conditions presented in the previous section, we now move forward with our analysis of the impact a single collaborative link has on a device's ability to locate itself without ambiguity. Let $\nbbN_\nbz = \sum_{x \in \PPPa} \indicator(\SINR_{x \to \nbz} \geq \threshold)$ represent the number of BSs hearable at some device $\nbz$, where we recall that $\threshold$ is the SINR threshold for successfully detecting BS signals. If $\Lnc$ represents the event that device $\da$ is capable of localizing itself using only its BS connections, then
\begin{align}
\P(\Lnc) = \P(\nbbN_\da \geq \ell+1),\label{Eq:P2:PLnc}
\end{align}
where $\ell=2$ in the case of ranging observations to BSs (Condition~L1) and $\ell=3$ in the case of range-difference observations to BSs (Condition~D1). Now, let $\ncalS_{\nbz}^{[\ell]}$ represent the set of $\ell$ BSs whose signals arrive with the highest SINRs at some device $\nbz$. Then, the probability of $\Lc$, the event that device $\da$ is localizable when capable of collaborating with a second device $\db$, is
\begin{alignat}{2}
\P\left(\Lc\right)
&= \P\left(\Lnc\right) &&+ \sum_{n=\ell}^\infty \P\left(\nbbN_\da = \ell, \nbbN_\db = n, \left\Vert \seta{\ell} \cup \setb{n} \right\Vert \geq \ell+1\right) \notag\\
&\stackrel{(a)}{=} \P\left(\Lnc\right) &&+ \sum_{n=\ell}^\infty \P\left(\nbbN_\da = \ell, \nbbN_\db = n, \left\Vert \seta{\ell} \cup \setb{n} \right\Vert \geq \ell+1\middle\vert \seta{\ell} \neq \setb{\ell}\right)\P\left(\seta{\ell} \neq \setb{\ell}\right)\notag\\
&&&+ \sum_{n=\ell}^\infty \P\left(\nbbN_\da = \ell, \nbbN_\db = n, \left\Vert \seta{\ell} \cup \setb{n} \right\Vert \geq \ell+1\middle\vert \seta{\ell} = \setb{\ell}\right)\P\left(\seta{\ell} = \setb{\ell}\right) \notag \\
&\stackrel{(b)}{=} \P\left(\Lnc\right) &&+ \sum_{n=\ell}^\infty \P\left(\nbbN_\da = \ell, \nbbN_\db = n \middle\vert \seta{\ell} \neq \setb{\ell}\right)\P\left(\seta{\ell} \neq \setb{\ell}\right)\notag\\
&&&+ \sum_{n=\ell+1}^\infty \P\left(\nbbN_\da = \ell, \nbbN_\db = n \middle\vert \seta{\ell} = \setb{\ell}\right)\P\left(\seta{\ell} = \setb{\ell}\right)  \notag \\
&\stackrel{(c)}{=} \P\left(\Lnc\right) &&+ \sum_{n=\ell}^\infty \P\left(\nbbN_\da = \ell, \nbbN_\db = n \middle\vert \seta{\ell} \neq \setb{\ell}\right)\P\left(\seta{\ell} \neq \setb{\ell}\right)\notag\\
&&&+ \sum_{k=1}^\infty \P\left(\nbbN_\da = \ell, \nbbN_\db = \ell+k \middle\vert \seta{\ell} = \setb{\ell}\right)\P\left(\seta{\ell} = \setb{\ell}\right),\label{Eq:PLc1}
\end{alignat}
where $\ell = 2, 3$ for range and range-difference observations, respectively, $(a)$ follows from the law of total probability and Bayes' rule, $(b)$ follows from (i) the fact that $\nbbN_\da = \ell$, $\nbbN_\db = n \geq \ell$, and $\seta{\ell} \neq \setb{\ell}$ imply that $\left\Vert \seta{\ell} \cup \setb{n} \right\Vert \geq \ell+1$ and (ii) the fact that $\seta{\ell} = \setb{\ell}$ implies that $\left\Vert \seta{\ell} \cup \setb{\ell} \right\Vert \not\geq \ell+1$, and $(c)$ follows from a simple rewriting of the lower limit in the second summation.

\subsection{The no shadowing case}

First, we consider the scenario without shadowing. When shadowing is absent, i.e., $\ncalF_{g \to h} = 1$ in \eqref{Eq:P2:SINR} for all $g$ and $h$, ranking the BSs by decreasing SINRs is equivalent to ranking them by increasing distances from the receiver. In other words, there is a strong correlation between the hearable BSs at two nearby devices. Following this train of thought, we now note that in \eqref{Eq:PLc1},
\[
\P\left(\nbbN_\da = \ell, \nbbN_\db = \ell+k\middle\vert \seta{\ell} = \setb{\ell}\right) \to 0
\]
quickly as $k$ increases. Intuitively, a hearability mismatch of $k$ BSs is unlikely, even for small values of $k$,  when the closest $\ell$ BSs to $\da$ and $\db$ are conditioned to be the same. We now remove this term by letting $\P\left(\nbbN_\da = \ell, \nbbN_\db = \ell+k\middle\vert \seta{\ell} = \setb{\ell}\right)=0$ since $k \geq 1$ and approximate \eqref{Eq:PLc1} as
\begin{equation}
\P\left(\Lc\right) \approx \P\left(\Lnc\right) + \sum_{n=\ell}^\infty \P\left(\nbbN_\da = \ell, \nbbN_\db = n \middle\vert \seta{\ell} \neq \setb{\ell}\right)\P\left(\seta{\ell} \neq \setb{\ell}\right). \label{Eq:PLc2}
\end{equation}
Lastly, we make the following assumption which will simplify the analysis of the conditional joint hearability probability.

\begin{assumption}[Independent base station hearability]\label{Assumption:indep_hearability}
When two devices have different sets of $\ell$ strongest base stations, their joint hearability probability may be calculated as the product of their individual hearability probabilities. Mathematically, this means
\[
\P\left(\nbbN_\da = m, \nbbN_\db = n \middle\vert \seta{\ell} \neq \setb{\ell}\right) = \P\left(\nbbN_\da = m\right)\P\left(\nbbN_\db = n\right),
\]
where $\ell = 2, 3$ is used for range-based and range-difference-based localization, respectively.
\end{assumption}

Under Assumption~\ref{Assumption:indep_hearability}, we arrive at the following final expression for $\P(\Lc)$ in \eqref{Eq:PLc2}:
\begin{align}
\P\left(\Lc\right)
&\approx \P\left(\Lnc\right) + \sum_{n=\ell}^\infty \P\left(\nbbN_\da = \ell\right)\P\left(\nbbN_\db = n\right)\P\left(\seta{\ell} \neq \setb{\ell}\right) \notag \\
&= \P\left(\nbbN_\da \geq \ell+1\right) + \P\left(\nbbN_\da = \ell\right)\P\left(\nbbN_\db \geq \ell\right)\P\left(\seta{\ell} \neq \setb{\ell}\right). \label{Eq:PLc3}
\end{align}
Exact expressions for the hearability terms in~\eqref{Eq:PLc3} (i.e., all terms except $\P(\seta{\ell} \neq \setb{\ell})$) are presented in~\cite{Keeler2013}. The exact calculations, however, are extremely involved and time consuming, leading us to employ the approximations presented in~\cite{Schloemann2015c}, which are nearly indistinguishable from truth and can be calculated instantly. Specifically, $\P(\nbbN_\nbz \geq L) = \pl(1,1,\alpha,\threshold,1,\pppai)$, where the right-hand term is presented in Theorem~2 of~\cite{Schloemann2015c}. In the following section, we will derive an exact expression for the remaining term.

\subsection{Probability that two devices share the same set of closest BSs}\label{Sec:2dSameBSs}

Recall that MD $\da$ is the device whose localizability is being directly considered and MD $\db$ is a secondary device with whom $\da$ is able to collaborate. Due to the stationarity of homogeneous PPPs and Slivnyak's Theorem~\cite{Haenggi2013}, the statistics of $\PPPa$ and $\PPPb$ are unaffected by the arbitrary placement of a finite number of MDs in $\R^2$. Thus, without loss of generality, let $\da$ be located at the origin $\origin$ and $\db$ be a random distance $D$ away and located at $\db = [D\ 0]^\T\!\!\!,$ where ${}^\T$ is the matrix transpose operator. We begin our derivation of $\P(\seta{\ell} \neq \setb{\ell})$ without considering the distribution of $D$, but rather by conditioning on $D = d$. In order to proceed, we first need to understand the following shape.

\begin{figure}
\centering
\includegraphics[width=0.8\figurewidth]{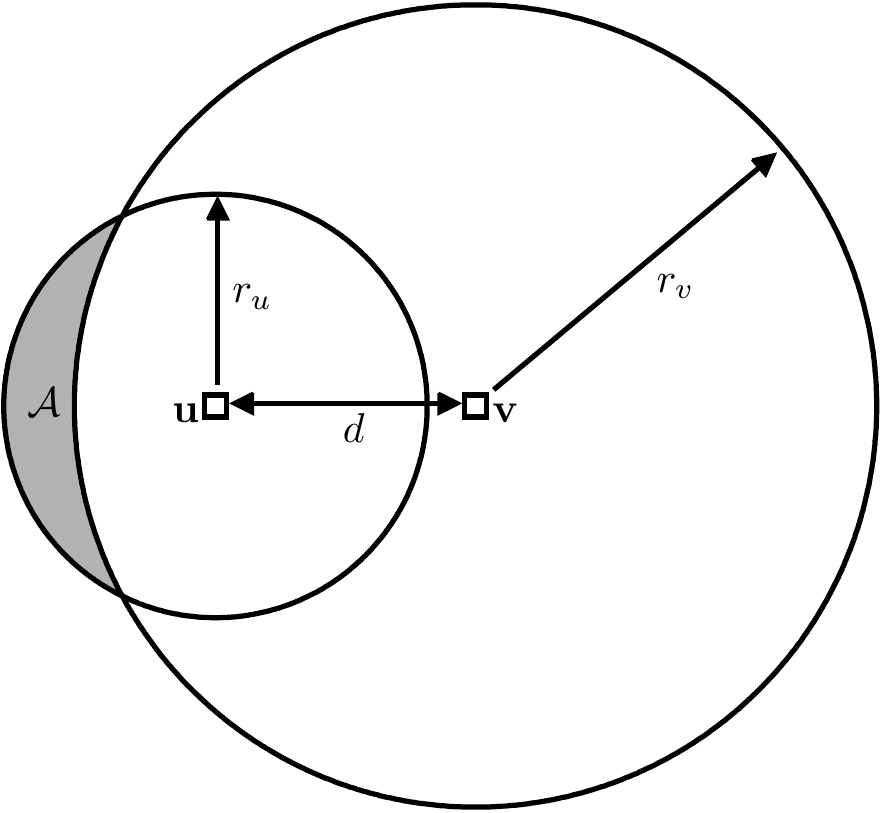}
\caption{\textsc{A lune} is formed by the region of one circle which is outside its intersection with another partially-overlapping circle. The area of the lune, $\ncalA$, depends on $r_u$, $r_v$, and $d$ as described in \eqref{Eq:LuneArea}.}
\label{Fig:lune}
\end{figure}
\begin{definition}[Lune]
Consider two partially-overlapping circles with radii $r_u$ and $r_v$ whose centers are separated by distance $d$, as shown in Figure~\ref{Fig:lune}. Region $\ncalA$ is called a \emph{lune} and its area is~\cite{LuneArea}
\begin{align}
\Alune(r_u, r_v, d) = &\frac{1}{2}\sqrt{(r_u+r_v+d)(r_v+d-r_u)(d+r_u-r_v)(r_u+r_v-d)} \notag \\
&+ r_u^2 \sec^{-1}\left(\frac{2 d r_u}{r_v^2 - r_u^2 - d^2}\right) - r_v^2 \sec^{-1}\left(\frac{2 d r_v}{r_v^2 + d^2 - r_u^2}\right).\label{Eq:LuneArea}
\end{align}
\end{definition}

Next, we present a set of lemmas which are necessary to characterize the probability that two devices $\da$ and $\db$, separated by distance $d$, share the same set of $\ell$ closest BSs. Two lemmas are necessary in order to capture the two key geometric conditions which arise: when the $\ell\th$ farthest BSs to $\da$ and $\db$ are (i) the same and (ii) different.

\begin{lemma}\label{Lemma:sameLth}
The probability that $\da$ and $\db$, separated by distance $d$, have the same set of $\ell$ closest base stations, while also having the same $\ell\th$ closest BS, is

\noindent $\P(\seta{\ell} = \setb{\ell}, \seta{\ell-1} = \setb{\ell-1} | D=d) = $
\begin{align}
&\frac{2}{\pi\Gammal} \int_0^\infty \int_0^\pi \left( \frac{\pi r^2-\Alune(r, \bL, d)}{\pi r^2} \right)^{\ell-1} \notag \\
&\times e^{-\pppai(\Alune(\bL, r, d)+\pi r^2)} \frac{(\pppai \pi r^2)^\ell}{r}\ \d\theta\ \d{r}.
\end{align}
\end{lemma}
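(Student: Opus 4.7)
The plan is to condition on the position of the $\ell$th nearest BS to $\da$ and then appeal to the independence of the PPP across disjoint regions to factor the event. Place $\da$ at the origin and $\db$ at $(d,0)$, and parameterize the $\ell$th nearest BS to $\da$ in polar coordinates $(r,\theta)$. A standard PPP computation, combining the Poisson probability of exactly $\ell-1$ points inside $B(\origin,r)$ with the differential intensity $\pppai\,r\,\d r\,\d\theta$ on the boundary, and noting that the angle is uniform on $[0,2\pi)$, yields the joint density
\[
f(r,\theta)\;=\;\frac{(\pppai\pi)^\ell\,r^{2\ell-1}}{\pi\,\Gammal}\,e^{-\pppai\pi r^2}.
\]

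Given that this BS sits at $(r,\theta)$, its distance to $\db$ is $\bL=\sqrt{r^2+d^2-2rd\cos\theta}$. I would then recast the event $\{\seta{\ell}=\setb{\ell},\ \seta{\ell-1}=\setb{\ell-1}\}$ as the conjunction of two geometric conditions: (i) each of the $\ell-1$ BSs in $B(\origin,r)$ also lies in $B(\db,\bL)$, equivalently inside $B(\origin,r)\cap B(\db,\bL)$, whose area is $\pi r^2-\Alune(r,\bL,d)$; and (ii) no BS of $\PPPa$ lies in the lune $B(\db,\bL)\setminus B(\origin,r)$ of area $\Alune(\bL,r,d)$. Condition (i) forces the $\ell-1$ closer BSs at the two devices to coincide, while (ii) prevents any stray BS from being closer to $\db$ than the common candidate.

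The two conditions involve disjoint regions and so are independent. Conditional on exactly $\ell-1$ points of $\PPPa$ in $B(\origin,r)$, those points are i.i.d.\ uniform on that disc, so (i) occurs with probability $\bigl[(\pi r^2-\Alune(r,\bL,d))/(\pi r^2)\bigr]^{\ell-1}$. The restriction of $\PPPa$ to the complement of $B(\origin,r)$ is an independent PPP of the same intensity, so (ii) contributes the void probability $e^{-\pppai\Alune(\bL,r,d)}$. Multiplying these two factors by $f(r,\theta)$, integrating over $(r,\theta)\in(0,\infty)\times[0,2\pi)$, using reflective symmetry about the $\da$-$\db$ axis to restrict to $\theta\in[0,\pi]$ with a compensating factor of $2$, and rewriting $(\pppai\pi)^\ell r^{2\ell-1}=(\pppai\pi r^2)^\ell/r$, delivers the claimed integral.

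The hard part is conceptual rather than computational: translating the event $\{\seta{\ell}=\setb{\ell},\seta{\ell-1}=\setb{\ell-1}\}$ into the clean pair of geometric conditions above, and justifying the factorization as a consequence of the independence of a PPP on disjoint regions together with the uniform-order-statistics property inside $B(\origin,r)$. A minor subtlety worth verifying is that when the two discs fail to overlap, $\Alune(r,\bL,d)=\pi r^2$ and the inside factor vanishes, so the integrand correctly contributes zero without the need for any separate case analysis.
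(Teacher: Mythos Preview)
Your proposal is correct and follows essentially the same approach as the paper: condition on the polar coordinates $(r,\theta)$ of the $\ell\th$ closest BS to $\da$, translate the event into (i) the $\ell-1$ interior points falling in the lens $\ncalU\cap\ncalV$ and (ii) the lune $\ncalV\setminus\ncalU$ being void, factor via PPP independence on disjoint regions together with the uniform conditional distribution inside $B(\origin,r)$, and finish by symmetry in $\theta$. Your joint density $f(r,\theta)$ is exactly the product $f_{R_\ell}(r;\ell,\pppai)\cdot\frac{1}{2\pi}$ used in the paper, so the two derivations are the same up to notation.
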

\begin{proof}
See Appendix~\ref{Proof:sameLth}.
\end{proof}

\begin{lemma}\label{Lemma:diffLth}
The probability that $\da$ and $\db$, separated by distance $d$, have the same set of $\ell$ closest BSs, while differing in their $\ell\th$ closest BSs, is

\noindent $\P(\seta{\ell} = \setb{\ell}, \seta{\ell-1} \neq \setb{\ell-1} | D=d) =$
\begin{align}
&\frac{2(\ell-1)}{\pi\Gammal} \int_0^\infty \frac{1}{\pi r^2} \int_0^{\pi} \int_{\bL}^{d+r} \left( \frac{\pi r^2-\Alune(r,x,d)}{\pi r^2} \right)^{\ell-2} \notag \\
&\times e^{-\pppai (\Alune(x,r,d)+\pi r^2)} \phirange(d,r,x)\frac{x(\pppai \pi r^2)^\ell}{r}\  \d x\ \d \theta\ \d r,
\end{align}
where
\[
\phirange(d,r,x) = 2 \cos^{-1}\!\left(\frac{d^2 + x^2 - r^2}{2 \cdot d \cdot x}\right).
\]
\end{lemma}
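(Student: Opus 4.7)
The plan is to apply a bivariate Campbell–Mecke argument (the second-order factorial moment measure) to the PPP $\PPPa$, tagging two base stations: $A$, the $\ell$th closest BS to $\da$, and $B$, the $\ell$th closest BS to $\db$. The joint event $\{\seta{\ell} = \setb{\ell},\ \seta{\ell-1} \neq \setb{\ell-1}\}$ is equivalent to requiring $A \neq B$ with both lying in the common set $\seta{\ell} = \setb{\ell}$, and the remaining $\ell - 2$ members of that set shared between the two devices. Without loss of generality, I place $\da$ at the origin and $\db = [d\ 0]^{\T}$, parameterize $A$ in polar coordinates $(r, \theta)$ from $\da$ (so $\|A - \db\| = \bL$), and parameterize $B$ in polar coordinates $(x, \phi)$ from $\db$.

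Next I would enumerate the geometric constraints on the remaining points of $\PPPa$. Let $D_\da(r)$ and $D_\db(x)$ denote the closed disks of radii $r$ and $x$ centered at the two users, with intersection $I$ and union $V$. The event holds iff (i) the $\ell - 2$ BSs shared by the two closest-$\ell$ sets all lie in $I$ (strictly closer than $A$ to $\da$ and strictly closer than $B$ to $\db$), (ii) no further BSs of $\PPPa$ lie in $V \setminus I$, since any such point would destroy the set equality, (iii) $B$ itself lies in $D_\da(r)$ (it is among the $\ell$ closest to $\da$), and (iv) $A$ is closer to $\db$ than $B$, i.e., $x > \bL$. By the law of cosines applied to the triangle $\da\db B$, constraint (iii) restricts $\phi$ to an arc of total angular measure $\phirange(d,r,x) = 2\cos^{-1}\left(\frac{d^2+x^2-r^2}{2dx}\right)$ centered on the $\da$–$\db$ axis, and its feasibility forces $x < d + r$ (the lower feasibility bound $x \ge |d - r|$ is automatic from (iv) via the triangle inequality $\bL \ge |d-r|$). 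Conditional on the tagged positions of $A$ and $B$, the independence and void-probability properties of the PPP on the disjoint pieces $I$ and $V \setminus I$ yield a conditional probability
\[
e^{-\pppai |V|}\, \frac{(\pppai |I|)^{\ell - 2}}{(\ell-2)!}.
\]

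Finally, I would assemble the integral via the second factorial moment measure: integrate $\pppai^2$ times the conditional probability against the area elements $r\,\d r\,\d\theta$ (for $A$) and $x\,\d x\,\d\phi$ (for $B$) over the feasible region $r \in (0,\infty)$, $\theta \in [0,2\pi)$, $x \in (\bL, d+r)$, and $\phi$ restricted to the arc of length $\phirange(d,r,x)$ identified in (iii). Performing the trivial $\phi$-integral produces the factor $\phirange(d,r,x)$; the $\theta \mapsto -\theta$ symmetry folds $[0,2\pi)$ down to $[0,\pi]$ at the cost of a leading factor of $2$; applying the lune identities $|V| = \Alune(x,r,d) + \pi r^2$ and $|I| = \pi r^2 - \Alune(r,x,d)$, together with the algebraic step $\pppai^2/(\ell-2)! = (\ell-1)\pppai^2/\Gammal$ and regrouping to extract the factor $(\pppai \pi r^2)^\ell$, recovers the stated expression. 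The main obstacle will be the careful bookkeeping in step (ii) — recognizing that \emph{both} exclusive crescents of $V \setminus I$ must be empty, not only the lune on $\db$'s side — and correctly translating constraint (iii) into the clean angular measure $\phirange$ via the law of cosines; once those are in place, the remaining PPP calculus and algebraic simplification are mechanical and parallel the proof of Lemma~\ref{Lemma:sameLth}.
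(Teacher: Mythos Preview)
Your proposal is correct and reaches the stated integral, but the route is genuinely different from the paper's. The paper proceeds by \emph{sequential} conditioning: it first fixes $x_\ell$, the $\ell$th closest BS to $\da$, via the known nearest-neighbour density $f_{R_\ell}(r;\ell,\pppai)$ (which already encodes that exactly $\ell-1$ points lie uniformly in the disk $\ncalU$ of radius $r$ about $\da$ and none lie on its boundary annulus); it then sums over which of those $\ell-1$ interior points is the $\ell$th closest to $\db$, conditions on that point's distance $x$ to $\db$ using the uniform-in-$\ncalU$ density $\phirange(d,r,x)\,x/|\ncalU|$, and finally multiplies by the probability $(|I|/|\ncalU|)^{\ell-2}$ that the remaining $\ell-2$ interior points land in $I$ and by the void probability $e^{-\pppai \Alune(x,r,d)}$ for the fresh PPP outside $\ncalU$. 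You instead tag \emph{both} extremal base stations $A$ and $B$ in one stroke via the second factorial moment measure (density $\pppai^2$) and Slivnyak, reducing everything to a single Poisson count on $I$ and a void on $V\setminus I$. Your approach is cleaner bookkeeping---no sum over $i$, no separate invocation of $f_{R_\ell}$ or conditional uniformity---while the paper's approach has the pedagogical advantage of reusing the nearest-neighbour distance machinery from Lemma~\ref{Lemma:sameLth} and making the role of $x_\ell$ explicit throughout. The two are algebraically equivalent once one unpacks $f_{R_\ell}(r;\ell,\pppai)=e^{-\pppai\pi r^2}\,2(\pppai\pi r^2)^\ell/(r\,\Gammal)$ and recognises that the paper's $(\ell-1)$ prefactor (from the sum over $i$) matches your $(\ell-1)=\Gammal/(\ell-2)!$ regrouping.
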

\begin{proof}
See Appendix~\ref{Proof:diffLth}.
\end{proof}

\noindent Combining the two lemmas, we arrive at the following theorem.

\begin{theorem}
The probability that two devices $\da$ and $\db$, separated by distance $d$, share the same set of $\ell$ closest BSs is
\begin{align}
\P(\seta{\ell} = \setb{\ell} | D=d) =\ &\P(\seta{\ell} = \setb{\ell}, \seta{\ell-1} = \setb{\ell-1} | D=d)\notag\\
&+ \P(\seta{\ell} = \setb{\ell}, \seta{\ell-1} \neq \setb{\ell-1} | D=d). \label{Eq:P2:Thm1}
\end{align}

\end{theorem}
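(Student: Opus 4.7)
The plan is to observe that this theorem is a direct application of the law of total probability (or, equivalently, of finite additivity of the conditional probability measure). The substance of the result is not in the combining step itself, but rather in the two lemmas that have just been proved; the theorem simply packages them into the quantity of interest.

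First I would note the structure of the event $\{\seta{\ell} = \setb{\ell}\}$ in the no-shadowing regime. Since without shadowing $\seta{\ell}$ is just the set of $\ell$ BSs closest to $\da$ (and similarly for $\db$), conditioning on $\seta{\ell} = \setb{\ell}$ only fixes the \emph{set} of the $\ell$ closest BSs common to both devices; the internal ranking can still differ between $\da$ and $\db$. In particular, the identity of the $\ell\text{th}$ closest BS at each device---which determines $\seta{\ell-1}$ and $\setb{\ell-1}$---is not pinned down. Hence the natural binary split of $\{\seta{\ell} = \setb{\ell}\}$ is according to whether $\seta{\ell-1} = \setb{\ell-1}$ or $\seta{\ell-1} \neq \setb{\ell-1}$.

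Next I would write the disjoint decomposition
\begin{equation*}
\{\seta{\ell} = \setb{\ell}\} = \{\seta{\ell} = \setb{\ell},\, \seta{\ell-1} = \setb{\ell-1}\} \;\sqcup\; \{\seta{\ell} = \setb{\ell},\, \seta{\ell-1} \neq \setb{\ell-1}\},
\end{equation*}
verify that these two sub-events are mutually exclusive and exhaustive within $\{\seta{\ell} = \setb{\ell}\}$, and then apply finite additivity of $\P(\cdot \mid D=d)$ to obtain \eqref{Eq:P2:Thm1}. Lemmas~\ref{Lemma:sameLth} and~\ref{Lemma:diffLth} supply closed-form integral expressions for the two terms on the right-hand side, so no additional computation is needed at this stage.

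There is no serious obstacle in this step; the technical difficulty lies upstream in Lemmas~\ref{Lemma:sameLth} and~\ref{Lemma:diffLth}, where one must carry out the geometric analysis involving the lune area $\Alune$ and the angular range $\phirange$. The only subtlety worth a sentence of justification in the write-up is that $\seta{\ell-1}$ is well-defined whenever $\seta{\ell}$ is (i.e., whenever $\nbbN_\da \geq \ell$), which holds almost surely on the event $\{\seta{\ell} = \setb{\ell}\}$ since under a PPP two BSs have equal distances to a common point with probability zero, so the ordering within $\seta{\ell}$ is unambiguous.
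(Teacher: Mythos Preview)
Your proposal is correct and follows exactly the paper's own approach: the paper's proof is a one-liner stating that the result follows from the law of total probability applied to the two disjoint sub-events treated in Lemmas~\ref{Lemma:sameLth} and~\ref{Lemma:diffLth}. Your additional remarks about the natural binary split and the almost-sure well-definedness of $\seta{\ell-1}$ are fine elaborations but not required.
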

\begin{proof}
By the law of total probability, the probability that the two mobile devices have the same set of $\ell$ closest BSs is simply the sum of the probabilities presented in Lemmas~\ref{Lemma:sameLth} and~\ref{Lemma:diffLth}.
\end{proof}

\begin{corollary}\label{Corr:P2:1.1}
When devices $\da$ and $\db$ (separated by distance $d$) both successfully hear exactly $\ell$ base stations, the probability that collaboration between them will result in a combined hearability of at least $\ell+1$ unique base stations is
\begin{equation}
\P(\| \seta{\ell} \cup \setb{\ell}\| \geq \ell+1 | D=d) = \P(\seta{\ell} \neq \setb{\ell} | D=d) = 1 - \P(\seta{\ell} = \setb{\ell} | D=d).\label{Eq:P2:Corr1.1}
\end{equation}
\end{corollary}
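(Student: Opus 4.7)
The plan is to reduce this corollary to an elementary set-cardinality identity followed by an application of the preceding theorem. First I would make explicit the conditioning that is implicit in the English statement: we are working on the event $\{\nbbN_\da = \ell,\ \nbbN_\db = \ell\}$, so both $\seta{\ell}$ and $\setb{\ell}$ are well-defined sets of cardinality exactly $\ell$. Given this, I would invoke inclusion--exclusion, $\|\seta{\ell}\cup\setb{\ell}\| = \|\seta{\ell}\| + \|\setb{\ell}\| - \|\seta{\ell}\cap\setb{\ell}\| = 2\ell - \|\seta{\ell}\cap\setb{\ell}\|$, and observe that $\|\seta{\ell}\cap\setb{\ell}\| \leq \ell$, with equality holding if and only if $\seta{\ell} = \setb{\ell}$. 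Consequently the two events $\{\|\seta{\ell}\cup\setb{\ell}\|\geq\ell+1\}$ and $\{\seta{\ell}\neq\setb{\ell}\}$ coincide on the conditioning event, which yields the first equality after taking conditional probabilities given $D=d$.

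The second equality is then nothing more than the complementation rule $\P(A^{c}\mid D=d) = 1 - \P(A\mid D=d)$, with $\P(\seta{\ell}=\setb{\ell}\mid D=d)$ supplied in closed form by the theorem that immediately precedes the corollary (i.e., the sum of the integrals from Lemmas~\ref{Lemma:sameLth} and~\ref{Lemma:diffLth}). There is no real obstacle here; the only point worth flagging is the bookkeeping around the conditioning event, since the probability notation on the right-hand side of \eqref{Eq:P2:Corr1.1} suppresses the hearability count condition carried by the prose. I would open the proof by stating this conditioning explicitly so that the cardinality argument above is unambiguous, and then close with a one-line appeal to the theorem.
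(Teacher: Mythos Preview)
Your proposal is correct and matches the paper's treatment: the paper states Corollary~\ref{Corr:P2:1.1} without an explicit proof, taking it as an immediate consequence of the preceding theorem together with the trivial set-cardinality observation you spell out. One minor remark: since $\ncalS_{\nbz}^{[\ell]}$ is \emph{defined} as the set of $\ell$ highest-SINR BSs, it has cardinality $\ell$ regardless of $\nbbN_{\nbz}$, so the conditioning on $\{\nbbN_\da=\ell,\nbbN_\db=\ell\}$ is contextual rather than a prerequisite for well-definedness---your inclusion--exclusion argument goes through either way.
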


In order to endow $D$ with a distribution, let us now consider device $\db$ to be the $K\th$ closest MD to $\da$. Since the MDs are modeled according to a homogeneous PPP with density $\pppbi$, it follows from Slivnyak's theorem~\cite{Haenggi2013} that the distribution of the distance from any device to its $K\th$ neighbor, $D = D_K$, is~\cite{Haenggi2005}
\begin{align}
	f_{D_K}(d; K, \pppbi)
	&= e^{-\pppbi \pi d^2 } \frac{2 (\pppbi \pi d^2)^K}{d \GammaK}. \label{Eq:P2:D_K}
\end{align}
Clearly, $K=1$ represents a case of particular interest, i.e., $\da$ collaborates with its closest neighbor. Now, for the general $K\th$ neighbor setup, we arrive at the following theorem.

\begin{theorem}
The probability that device $\da$ and its $K\th$ closest neighboring device $\db$ share the same set of $\ell$ closest base stations is
\begin{align}
\P(\seta{\ell} = \setb{\ell}) = \frac{2}{\GammaK}\int_0^\infty \P(\seta{\ell} = \setb{\ell} | D=\dK) e^{-\pppbi \pi \dK^2 } \frac{(\pppbi \pi \dK^2)^K}{\dK}\ \d\dK. \label{Eq:Thm:KthNeighbor}
\end{align}
\end{theorem}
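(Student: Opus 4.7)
The proof plan is essentially an application of the law of total probability, marginalizing the conditional probability $\P(\seta{\ell} = \setb{\ell} \mid D = d)$ (already established by combining Lemmas~\ref{Lemma:sameLth} and \ref{Lemma:diffLth}) against the density of the random distance $D = D_K$ from $\da$ to its $K\th$ closest neighbor $\db$. The density $f_{D_K}(d; K, \pppbi)$ is given in \eqref{Eq:P2:D_K} as a direct consequence of Slivnyak's theorem applied to the homogeneous PPP $\PPPb$ of MDs. So the chain is: place $\da$ at the origin (using stationarity of $\PPPa$), write $D_K$ via Slivnyak for the PPP $\PPPb$, integrate.

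First, I would justify that we may restrict attention to $D$ rather than to the full joint law of $\db$'s position. By the stationarity of $\PPPa$, the statistics of hearable BS sets at $\da$ and $\db$ depend only on the \emph{geometry} of the pair $\{\da, \db\}$, and by isotropy of $\PPPa$ they depend only on the scalar separation $\|\db - \da\|$. Hence, conditional on $D = d$, the probability in question equals $\P(\seta{\ell} = \setb{\ell} \mid D = d)$ regardless of the orientation of $\db$ relative to $\da$, which is exactly the quantity computed in \eqref{Eq:P2:Thm1}. This justifies treating $D$ as the only relevant random quantity describing $\db$'s location for the purposes of this probability.

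Second, I would apply the law of total probability:
\begin{equation*}
\P(\seta{\ell} = \setb{\ell}) = \int_0^\infty \P(\seta{\ell} = \setb{\ell} \mid D = \dK)\, f_{D_K}(\dK; K, \pppbi)\, \d \dK,
\end{equation*}
and then substitute the explicit form of $f_{D_K}$ from \eqref{Eq:P2:D_K}, yielding the claimed integral expression with the prefactor $2/\GammaK$ and integrand $e^{-\pppbi \pi \dK^2}(\pppbi \pi \dK^2)^K/\dK$.

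I do not anticipate any significant obstacle: the only subtle point is the independence argument used to separate the statistics of $\PPPa$ (which determine the conditional probability) from those of $\PPPb$ (which determine the distribution of $D_K$). This independence is built into the model since $\PPPa$ and $\PPPb$ are declared independent in Section~\ref{Sec:SystemModel}, so conditioning on a realization of $\PPPb$ (in particular on $D_K = \dK$) does not perturb the law of $\PPPa$. Everything else is just substitution and rewriting, so the statement follows directly.
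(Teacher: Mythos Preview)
Your proposal is correct and follows exactly the same approach as the paper: decondition the conditional probability from Theorem~1 on the distance $D_K$ using its known density~\eqref{Eq:P2:D_K}, which is precisely what the paper states in one line as $\P(\seta{\ell} = \setb{\ell}) = \E_{D_K}\!\left[\P(\seta{\ell} = \setb{\ell} \mid D_K)\right]$. Your additional remarks on isotropy and on the independence of $\PPPa$ and $\PPPb$ are sound and make explicit what the paper leaves implicit.
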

\begin{proof}
The result is obtained by deconditioning \eqref{Eq:P2:Thm1} on $D = D_K$, i.e.,
\[
\P(\seta{\ell} = \setb{\ell}) = \E_{D_K}\!\left[ \P(\seta{\ell} = \setb{\ell} | D_K) \right].
\]
\end{proof}

\begin{corollary}\label{Corr:P2:2.1}
Conditioned on device $\da$ and its $K\th$ closest neighbor $\db$ both successfully hearing exactly $\ell$ base stations, the probability that collaboration among them will lead to a combined hearability of at least $\ell+1$ unique base stations is
\begin{equation}
\P\left(\| \seta{\ell} \cup \setb{\ell}\| \geq \ell+1\right) = \P\left(\seta{\ell} \neq \setb{\ell}\right) = 1 - \P\left(\seta{\ell} = \setb{\ell}\right).\label{Eq:P2:Corr2.1}
\end{equation}
\end{corollary}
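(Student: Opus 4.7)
The plan is to recognize this statement as the unconditional (with respect to the random separation $D$) analogue of Corollary~\ref{Corr:P2:1.1}. The heavy machinery---namely, the geometric analysis encoded in Lemmas~\ref{Lemma:sameLth} and~\ref{Lemma:diffLth}, and the deconditioning carried out to produce~\eqref{Eq:Thm:KthNeighbor}---has already been deployed. All that remains is a short set-theoretic observation followed by one application of the complement rule.

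First, I would establish the key equivalence of events. Under the conditioning that both $\da$ and $\db$ hear exactly $\ell$ base stations, each of $\seta{\ell}$ and $\setb{\ell}$ has cardinality exactly $\ell$. Consequently, their union has cardinality at least $\ell+1$ if and only if at least one element lies in one set but not the other---equivalently, precisely when $\seta{\ell} \neq \setb{\ell}$. Conversely, $\seta{\ell} = \setb{\ell}$ forces $|\seta{\ell} \cup \setb{\ell}| = \ell < \ell + 1$. This delivers the first equality in~\eqref{Eq:P2:Corr2.1}.

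Second, the final equality $\P(\seta{\ell} \neq \setb{\ell}) = 1 - \P(\seta{\ell} = \setb{\ell})$ is just the complement rule applied to the event $\{\seta{\ell} = \setb{\ell}\}$, whose probability is precisely what the preceding theorem, equation~\eqref{Eq:Thm:KthNeighbor}, computes. Combining these two observations closes the argument.

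I do not anticipate any substantive obstacle: the corollary is effectively a restatement of Corollary~\ref{Corr:P2:1.1} in which the fixed separation $D=d$ has been replaced by the random $K\th$-neighbor distance, and the combinatorial identity between the cardinality condition and the set-inequality condition is invariant under that substitution. The only care required is to make explicit that the conditioning fixes $|\seta{\ell}| = |\setb{\ell}| = \ell$, so that ``at least $\ell+1$ unique base stations in the union'' is genuinely equivalent to ``the two $\ell$-sets disagree,'' rather than being a strictly stronger or weaker statement.
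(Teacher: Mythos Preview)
Your proposal is correct and matches the paper's approach: the corollary is stated without proof in the paper, being an immediate consequence of the preceding theorem via exactly the set-theoretic equivalence and complement rule you describe. There is nothing to add.
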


\noindent Finally, \eqref{Eq:P2:Corr2.1} in Corollary~\ref{Corr:P2:2.1} is the exact expression for $\P(\seta{\ell} \neq \setb{\ell})$ in \eqref{Eq:PLc3}, which, when combined with the hearability results in~\cite{Schloemann2015c}, yields $\P(\Lc)$, the probability of unique localizability in the collaborative scenario.

\subsection{The shadowing case}\label{Sec:P2:ShadowingCase}

Now, we consider the unique localizability problem in the presence of log-normal shadowing. The difficulty in analyzing this scenario lies in the fact that, unlike in the no shadowing case, the set of $\ell$ strongest BSs at some device $\nbz$, $\ncalS_{\nbz}^{[\ell]},$ is no longer directly tied to the $\ell$ geographically closest BSs to $\nbz$. Thus, we cannot use the geometric analysis of the previous section for comparing the sets of strongest BSs at two MDs. Instead, we note that $\P\left(\| \seta{\ell} \cup \setb{\ell}\| \geq \ell+1 \middle\vert D=d\right) = \P\left(\seta{\ell} \neq \setb{\ell} \middle\vert D=d\right) \to 1$ for all $d$ as the shadowing standard deviation $\sigma_s$ increases. This behavior is shown in Figure~\ref{Fig:P2_corollary11_lowerbound} for $\ell=2$, $\alpha=4$, and a shadowing correlation of 0.5 between the received signals at $\da$ and $\db$ from the same BS. In order to get an initial tractable expression for the shadowing case, we then use the simplifying assumption that $\P\left(\seta{\ell} \neq \setb{\ell}\right) = 1$ and invoke Assumption~\ref{Assumption:indep_hearability} to arrive at the following approximation of \eqref{Eq:PLc1} for the shadowing case:
\begin{align}
\P\left(\Lc\right) &\approx \P\left(\Lnc\right) + \sum_{n=\ell}^\infty \P\left(\nbbN_\da = \ell\right)\P\left(\nbbN_\db = n\right)\notag\\
&= \P\left(\nbbN_\da \geq \ell+1\right) + \P\left(\nbbN_\da = \ell\right)\P\left(\nbbN_\db \geq \ell\right).\label{Eq:P2:PLcShad}
\end{align}
An exact expression for \eqref{Eq:PLc1} in the shadowing case is significantly more challenging to derive and is outside the scope of this paper. Nevertheless, it will be evident in the following section that \eqref{Eq:P2:PLcShad} is surprisingly accurate.

\begin{figure}
\centering
\includegraphics[width=\figurewidth]{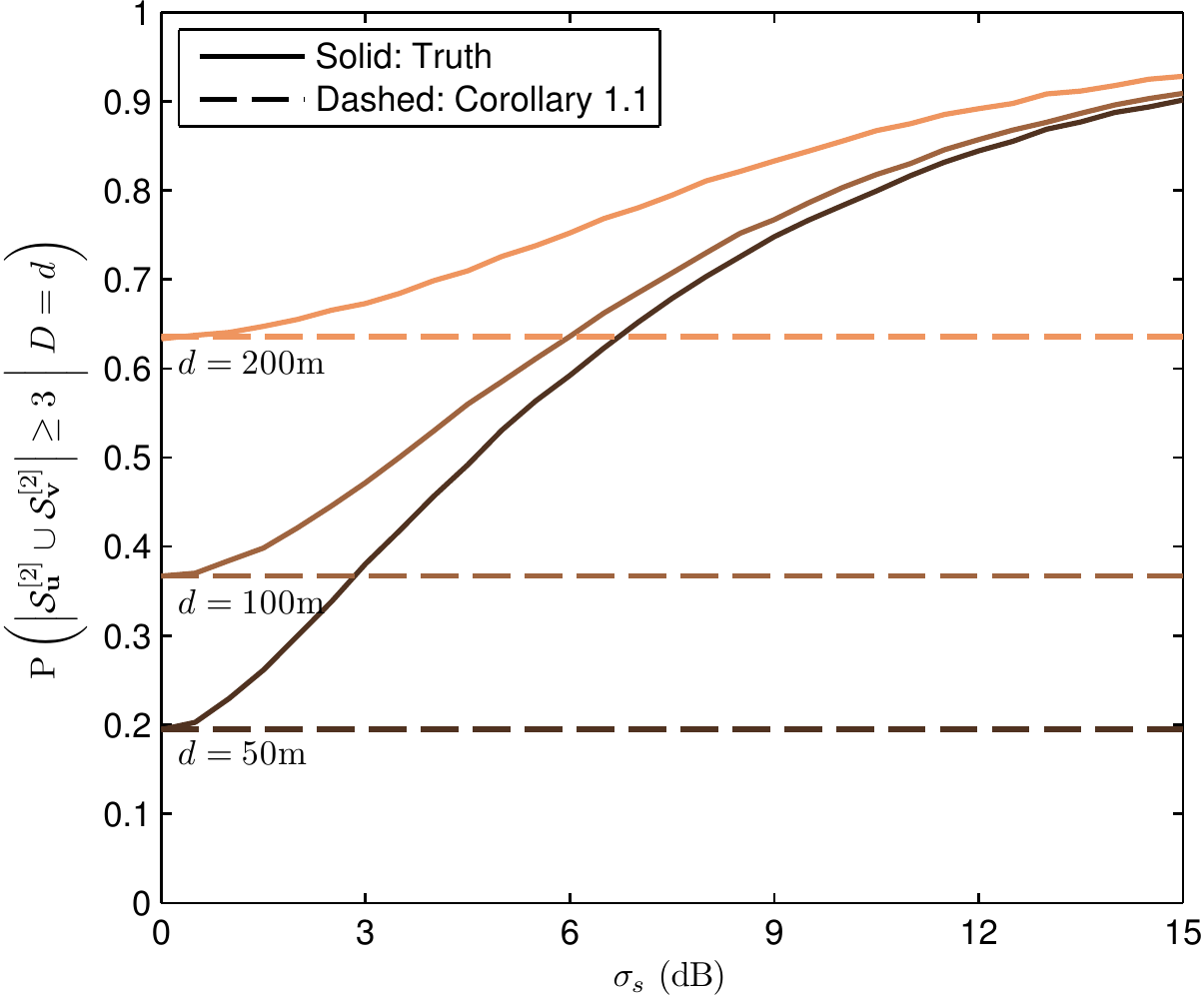}
\caption{\textsc{The Impact of Shadowing} on the probability that collaboration will increase the number of unique BSs involved in the positioning procedure when two collaborators, separated by distance $d$, each hear exactly $\ell=2$ BSs. ($\alpha=4$.)}
\label{Fig:P2_corollary11_lowerbound}
\end{figure}

\section{Numerical Results and Discussion}\label{Sec:NumericalResultsAndDiscussion}
In this section, we present numerical results and use them to draw insights into the value of collaboration for improving unique localizability. We begin by focusing on the no shadowing case and taking a look at the number of unique BSs among the closest BSs at two devices.

\subsection{Sufficient unique base stations versus collaborator separation}

\begin{figure}
\centering
\includegraphics[width=\figurewidth]{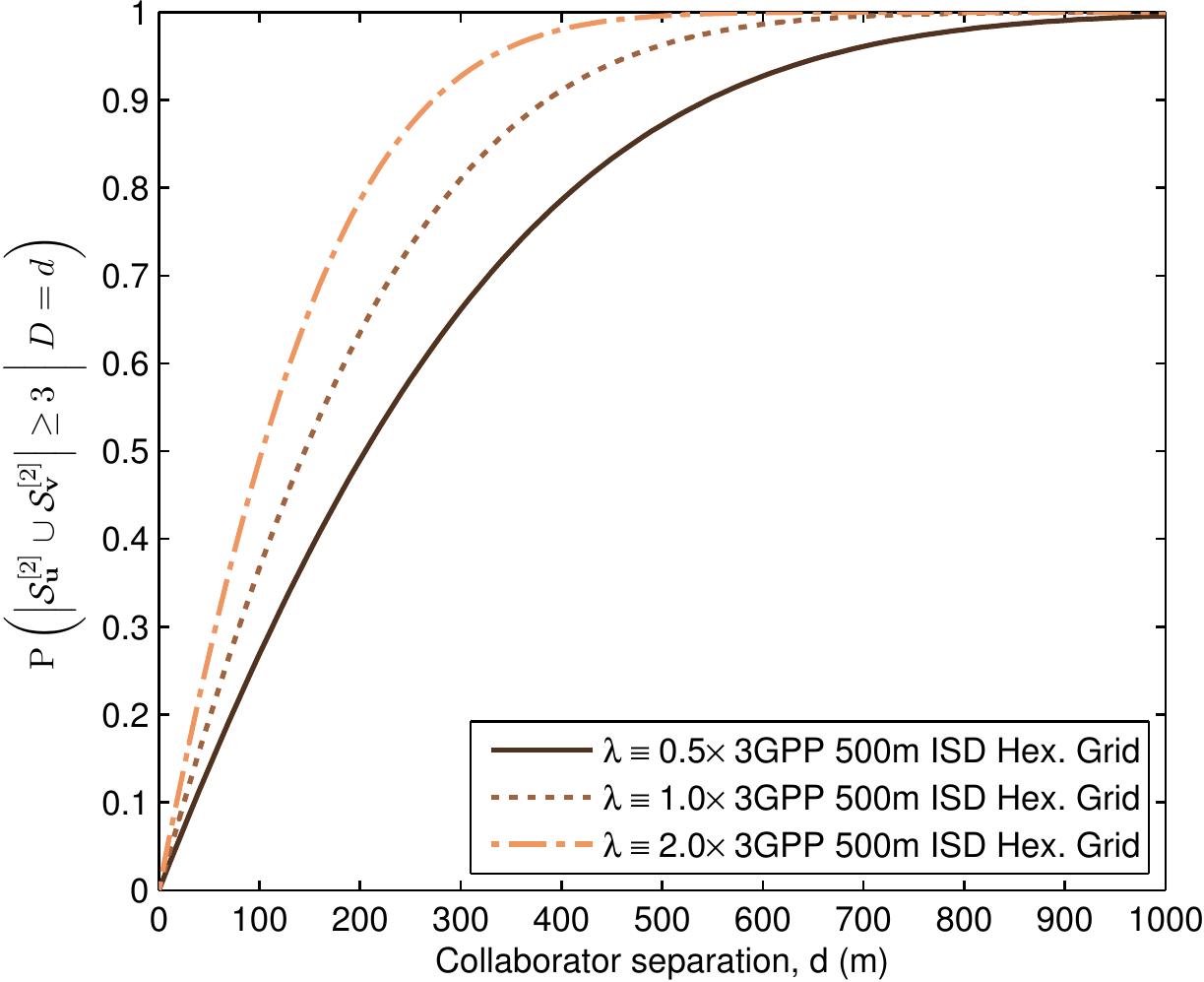}
\caption{\textsc{Uniqueness Among Closest Nodes}: The probability that two devices, separated by distance $d$ and each hearing exactly $\ell=2$ BSs, will benefit in terms of their combined number of unique BSs (Corollary~\ref{Corr:P2:1.1}).}
\label{Fig:P2_corollary11_v_d}
\end{figure}

First, let $\da$ and $\db$ be two devices separated by distance $d$ as described in Section~\ref{Sec:2dSameBSs}. Furthermore, recall that $\ell$ in \eqref{Eq:PLc1} equals 2 and 3 for range-based and range-difference-based localization, respectively. When both devices successfully hear exactly $\ell$ BSs, neither is localizable per the conditions presented in Proposition~\ref{Prop:P2:RangingConditions} and Assumption~\ref{Assumption:P2:RangeDifferenceConditions}. The key differentiator in determining whether collaboration between these devices will be beneficial to localizability is whether or not the two devices hear a combined $\ell+1$ or more unique BSs. When $d$ is fixed, it is the density of the BSs which will affect the probability of obtaining a sufficient number of unique BSs. This is illustrated in Figure~\ref{Fig:P2_corollary11_v_d} for $\ell=2$, where \eqref{Eq:P2:Corr1.1} in Corollary~\ref{Corr:P2:1.1} is plotted versus $d$ for various BS densities $\pppai$. Note that the densities are multiples of the PPP density which results in the same average number of BSs per unit area as an infinite hexagonal grid with 500m intersite distances (ISD). For a fixed separation $d$, it is obvious that a higher BS density leads to a greater likelihood that collaboration will be beneficial in this scenario. While collaboration with farther devices also increases this likelihood, Figure~\ref{Fig:P2_corollary11_v_d} reveals that there is a certain distance beyond which it is not necessary to collaborate.

\subsection{Sufficient unique base stations versus collaborator selection}

\ifx\sidebysidefigures\undefined

\begin{figure}
\centering
\subfloat[Range observations from BSs, $\ell = 2$.]{\label{Fig:P2_corollary21_v_K}\includegraphics[width=\figurewidth]{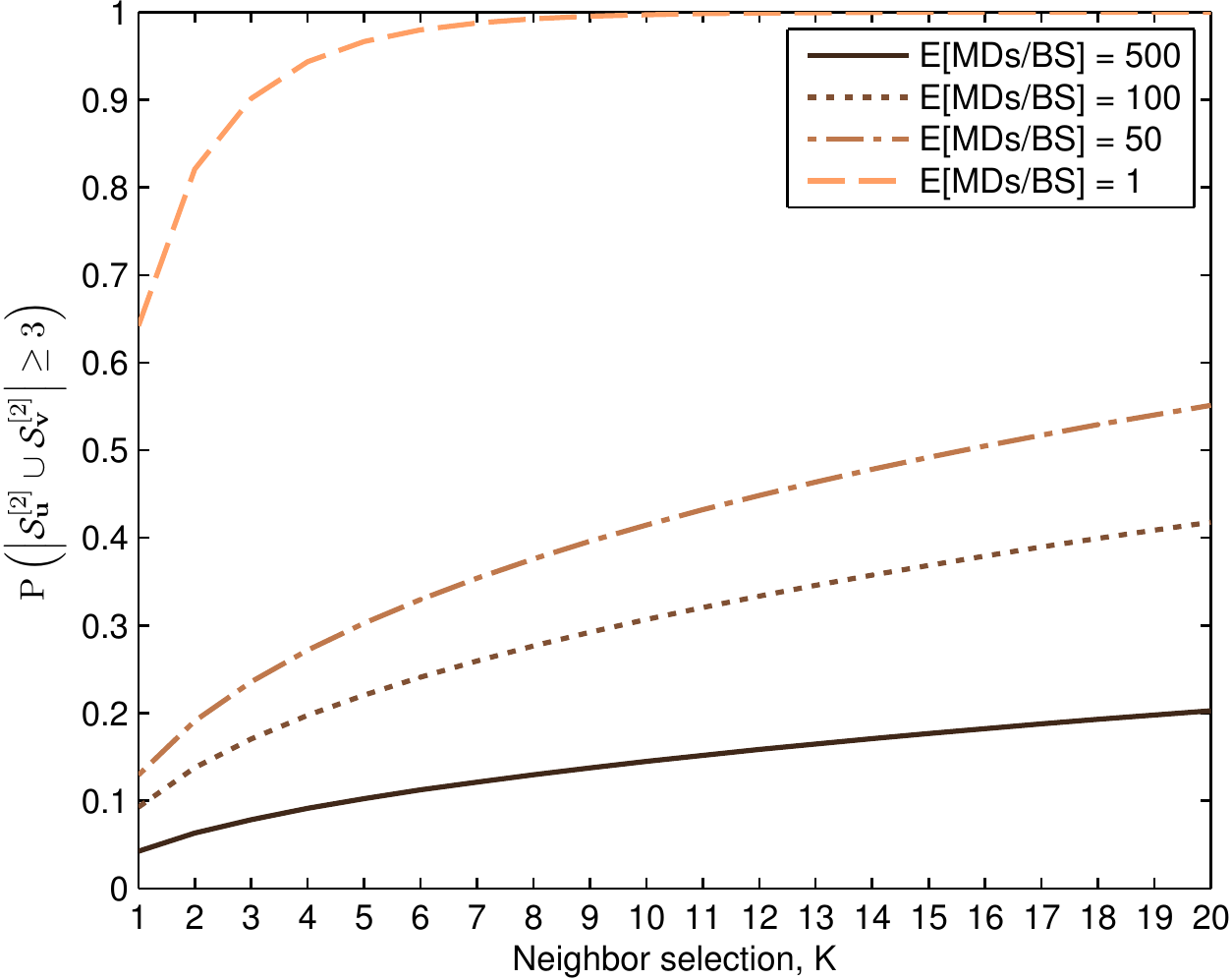}} \\
\subfloat[Range difference observations from BSs, $\ell = 3$.]{\label{Fig:P2_corollary21_v_K_cell}\includegraphics[width=\figurewidth]{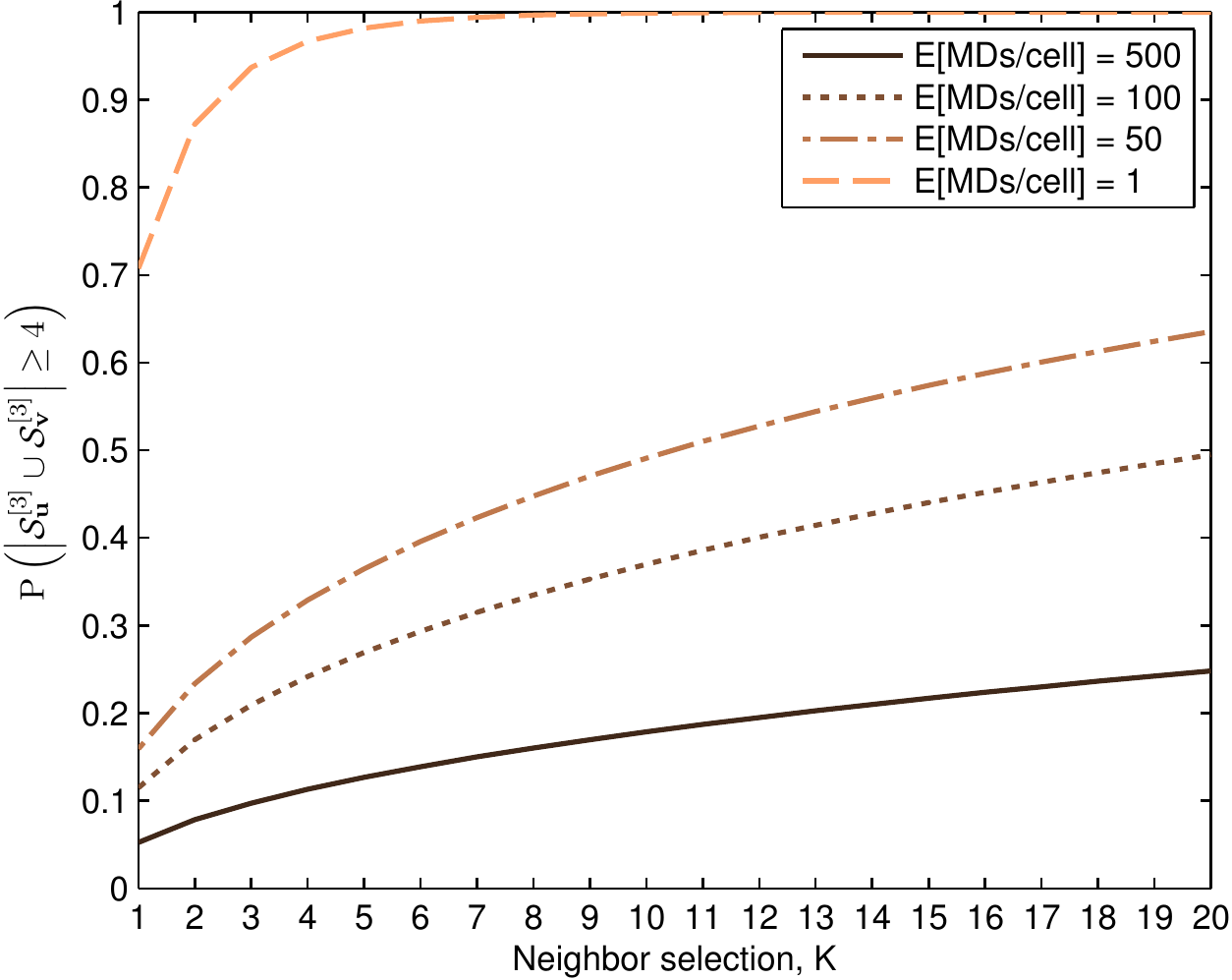}}
\caption{\textsc{The Impact of Neighbor Selection}: The analysis of Corollary~\ref{Corr:P2:2.1} reveals the benefit of selecting farther neighboring devices for increasing their combined number of unique BSs.}
\label{Fig:P2_corollary21}
\end{figure}

\else

\begin{figure}
\centering
\subfloat[Range observations from BSs, $\ell = 2$.]{\label{Fig:P2_corollary21_v_K}\includegraphics[width=\figurewidth]{P2_corollary21_v_K}} \qquad
\subfloat[Range difference observations from BSs, $\ell = 3$.]{\label{Fig:P2_corollary21_v_K_cell}\includegraphics[width=\figurewidth]{P2_corollary21_v_K_cell}}
\caption{\textsc{The Impact of Neighbor Selection}: The analysis of Corollary~\ref{Corr:P2:2.1} reveals the benefit of selecting farther neighboring devices for increasing their combined number of unique BSs.}
\label{Fig:P2_corollary21}
\end{figure}

\fi

Next, let $\db$ be the $K\th$ neighbor of $\da$ in the PPP of MDs $\PPPb$. Given that both devices successfully hear exactly $\ell$ BSs, Figure~\ref{Fig:P2_corollary21} presents the probabilities that a collaborative link between $\da$ and $\db$ will be beneficial to their unique localizabilities for (a) range and (b) range-difference observations from the BSs for various MD densities (expressed as the average number of MDs per BS or cell). The results show that selecting the closest neighbor, with whom it may likely be easiest to collaborate, is not necessarily a good idea, especially when a MD is in the neighborhood of a large number of other MDs. The nearest neighbors are valuable when MD densities are low, but even then, selecting a farther neighbor is typically more beneficial (up to a point). By revisiting~\eqref{Eq:P2:D_K}, it becomes clear that higher values of $K$ and lower values of $\pppbi$ lead to greater probabilities of \emph{longer} distances separating $\da$ and $\db$. Thus, we see that ultimately, it is the separation between collaborators that is the driving force behind determining the value of collaboration, which leads us to present all subsequent results in light of the distance $d$ separating the collaborating devices.

\subsection{Unique localizability with range and range-difference observations}

\ifx\sidebysidefigures\undefined

\begin{figure}
\centering
\subfloat[Probability of unique localizability]{\label{Fig:P2_localizability_withoutshadowing_vs_d}\includegraphics[width=\figurewidth]{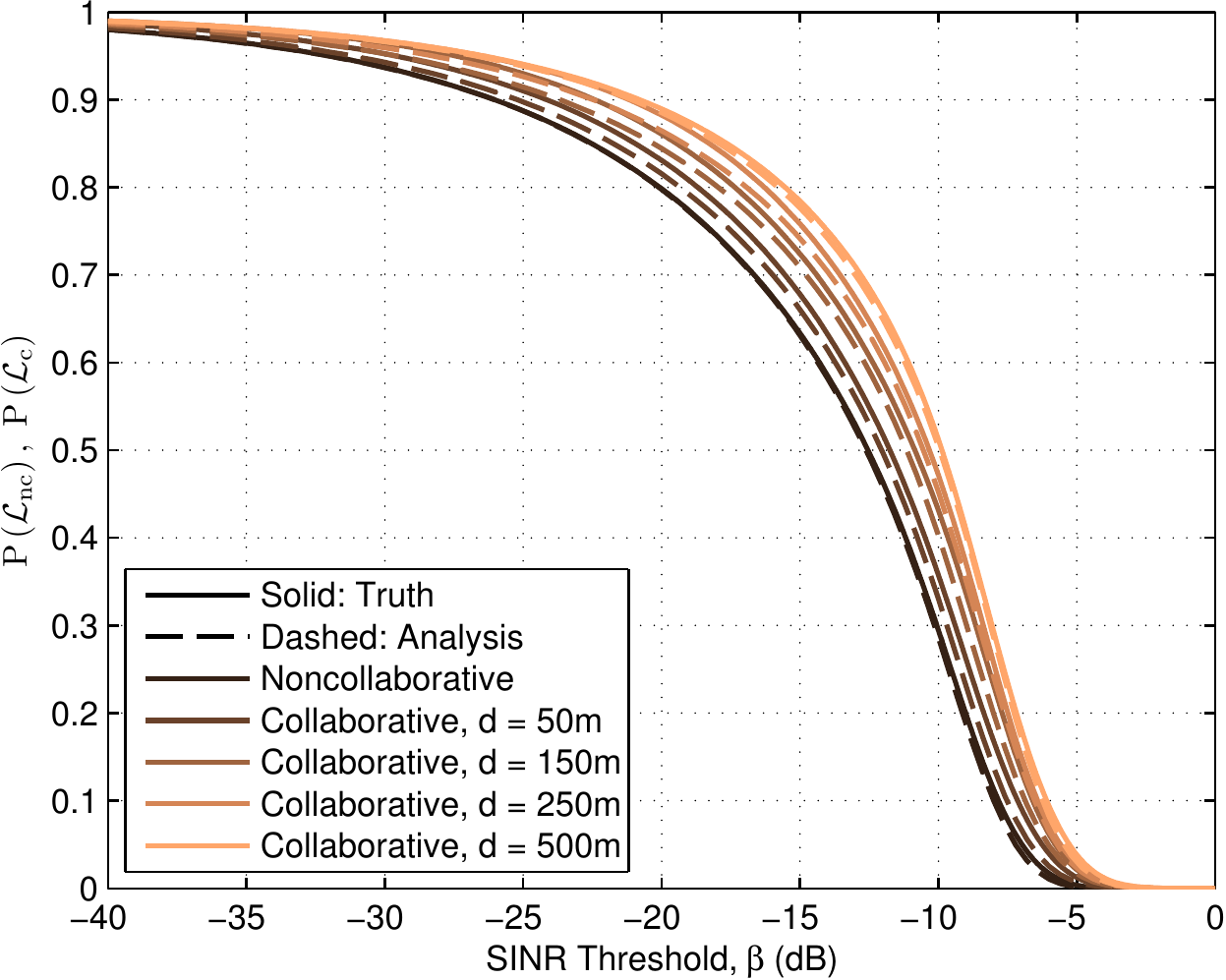}} \\
\subfloat[Absolute increase in unique localizability probability]{\label{Fig:P2_localizability_improvement_withoutshadowing_vs_d}\includegraphics[width=\figurewidth]{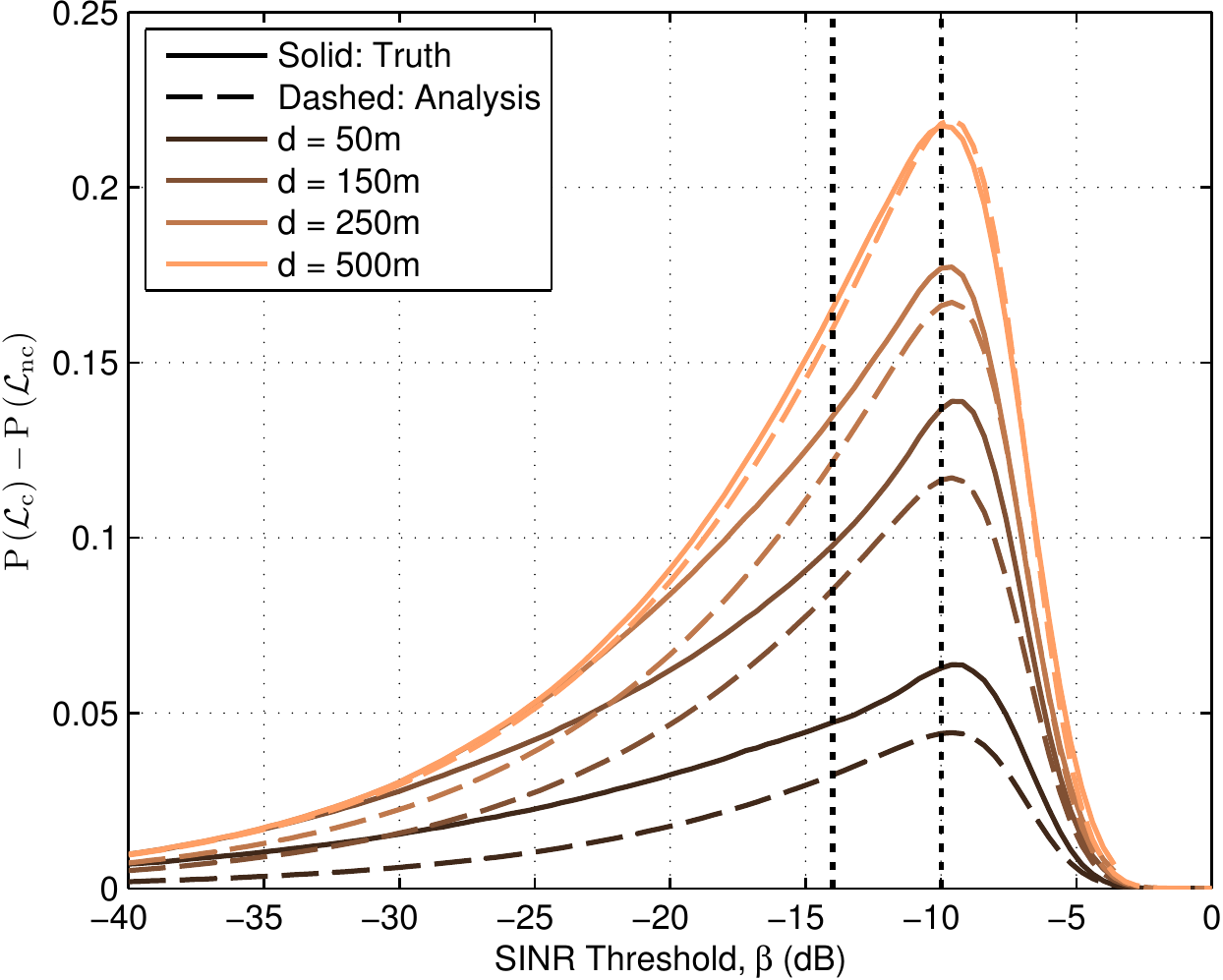}}
\caption{\textsc{Unique localizability} in range-based positioning (e.g., using RSS or TOA observations) for various collaborator separations $d$. The separation plays a major role in how beneficial collaboration will be. Note that the dotted lines at $\beta=-10$ and $-14$~dB delineate the range of SINR threshold values considered in 3GPP for far away BSs, which is just below where the peak collaborative benefits are obtained. ($\alpha=4$.)}
\label{Fig:P2_localizability_withoutshadowing}
\end{figure}

\begin{figure}
\centering
\subfloat[Probability of unique localizability]{\label{Fig:P2_celllocalizability_withoutshadowing_vs_d}\includegraphics[width=\figurewidth]{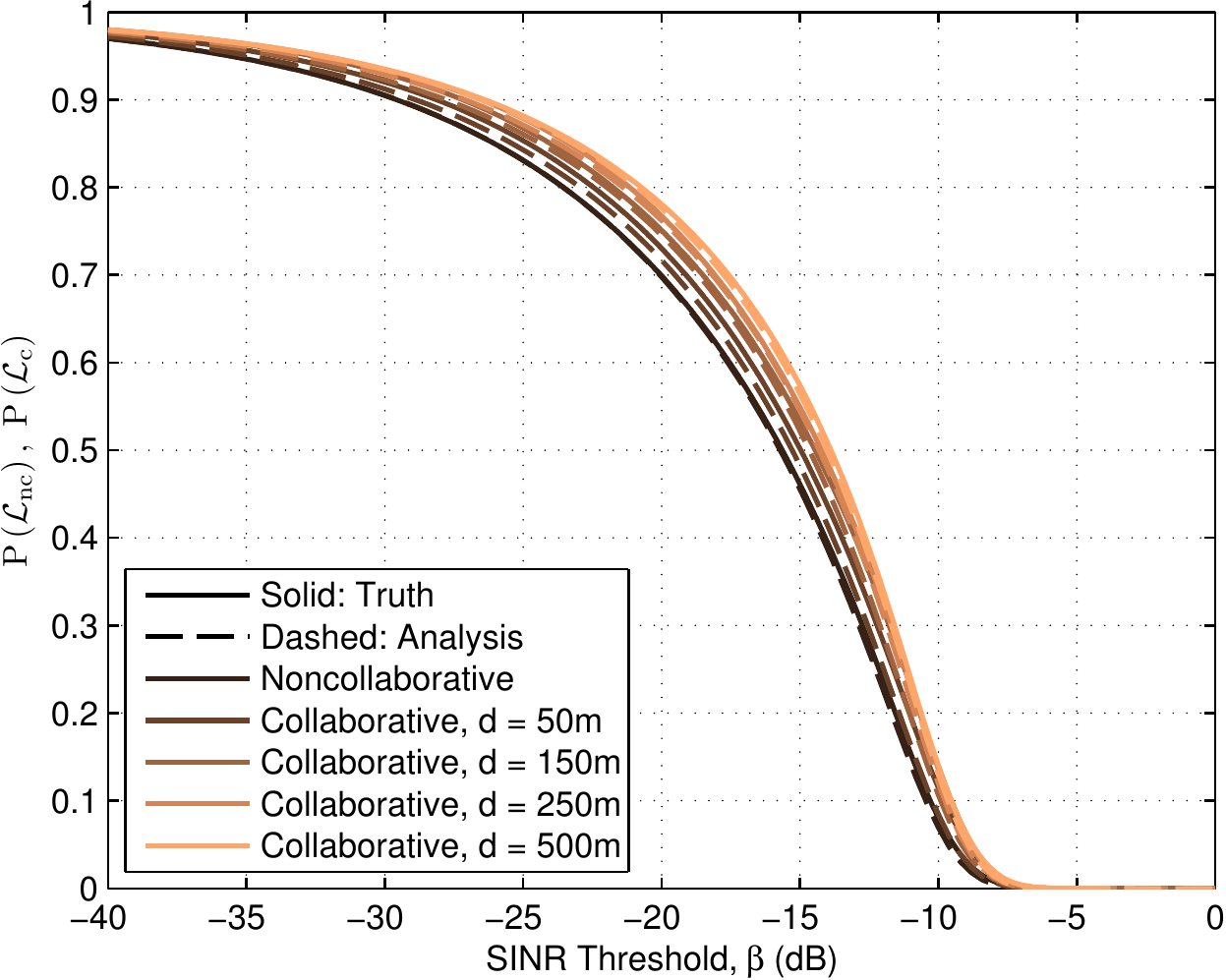}} \\
\subfloat[Absolute increase in unique localizability probability]{\label{Fig:P2_celllocalizability_improvement_withoutshadowing_vs_d}\includegraphics[width=\figurewidth]{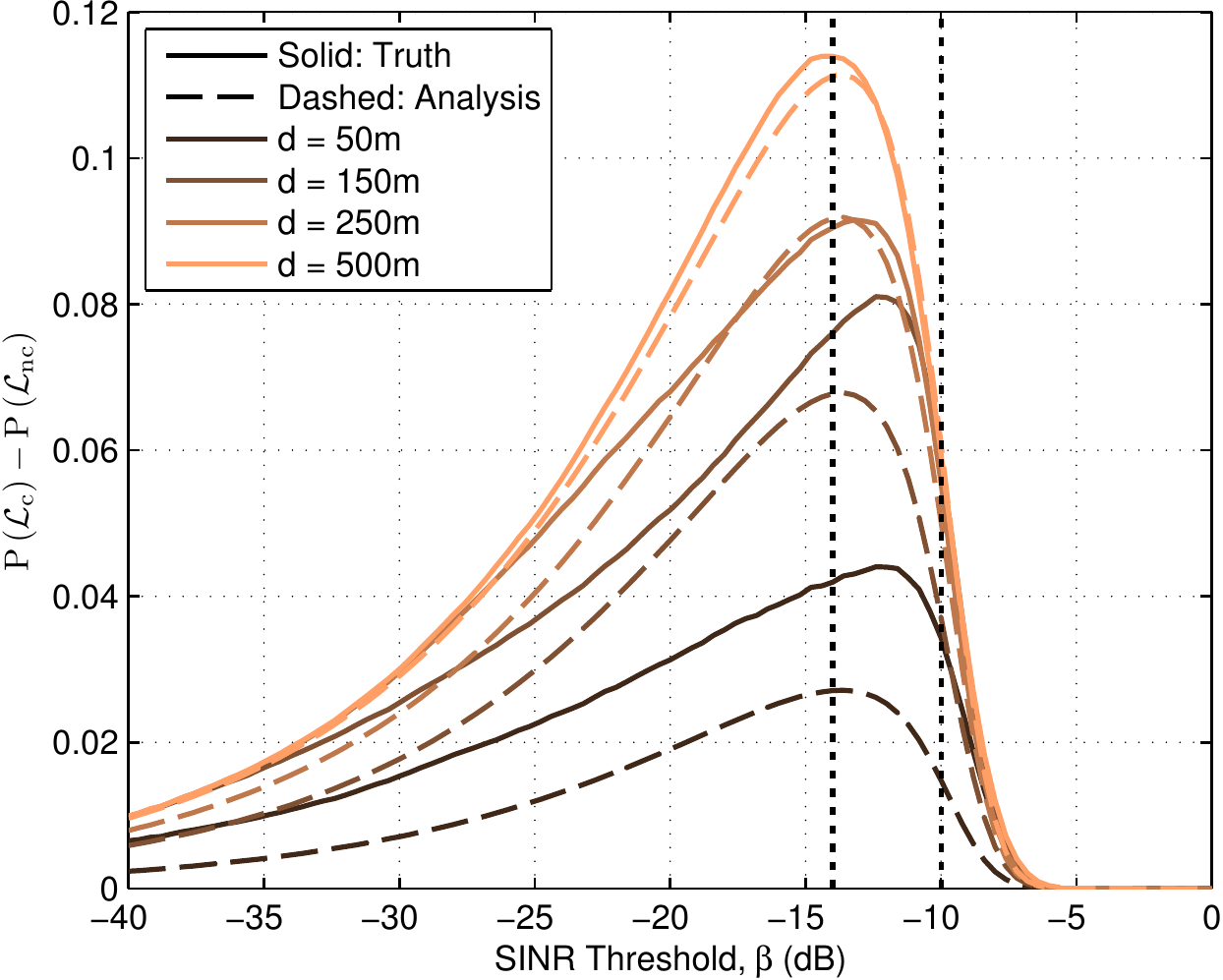}}
\caption{\textsc{Unique localizability} in range-difference-based positioning (e.g., using TDOA observations from BSs) for various collaborator separations $d$. The separation plays a major role in how beneficial collaboration will be. Note that the dotted lines at $\beta=-10$ and $-14$~dB delineate the range of SINR threshold values considered in 3GPP for far away BSs, which coincides with the peak collaborative benefits. ($\alpha=4$.)}
\label{Fig:P2_celllocalizability_withoutshadowing}
\end{figure}

\else

\begin{figure}
\centering
\subfloat[Probability of unique localizability]{\label{Fig:P2_localizability_withoutshadowing_vs_d}\includegraphics[width=\figurewidth]{P2_localizability_withoutshadowing_vs_d}} \qquad
\subfloat[Absolute increase in unique localizability probability]{\label{Fig:P2_localizability_improvement_withoutshadowing_vs_d}\includegraphics[width=\figurewidth]{P2_localizability_improvement_withoutshadowing_vs_d}}
\caption{\textsc{Unique localizability} in range-based positioning (e.g., using RSS or TOA observations) for various collaborator separations $d$. The separation plays a major role in how beneficial collaboration will be. Note that the dotted lines at $\beta=-10$ and $-14$~dB delineate the range of SINR threshold values considered in 3GPP for far away BSs, which is just below where the peak collaborative benefits are obtained. ($\alpha=4$.)}
\label{Fig:P2_localizability_withoutshadowing}
\end{figure}

\begin{figure}
\centering
\subfloat[Probability of unique localizability]{\label{Fig:P2_celllocalizability_withoutshadowing_vs_d}\includegraphics[width=\figurewidth]{P2_celllocalizability_withoutshadowing_vs_d}} \qquad
\subfloat[Absolute increase in unique localizability probability]{\label{Fig:P2_celllocalizability_improvement_withoutshadowing_vs_d}\includegraphics[width=\figurewidth]{P2_celllocalizability_improvement_withoutshadowing_vs_d}}
\caption{\textsc{Unique localizability} in range-difference-based positioning (e.g., using TDOA observations from BSs) for various collaborator separations $d$. The separation plays a major role in how beneficial collaboration will be. Note that the dotted lines at $\beta=-10$ and $-14$~dB delineate the range of SINR threshold values considered in 3GPP for far away BSs, which coincides with the peak collaborative benefits. ($\alpha=4$.)}
\label{Fig:P2_celllocalizability_withoutshadowing}
\end{figure}

\fi

Now, we consider the benefit of collaboration to localizability and how it is impacted by the SINR threshold $\threshold$. For $\alpha=4$, which we consider throughout as it is close to the 3.76 value used in 3GPP positioning studies~\cite{R1-091443} and allows the use of simplified expressions from~\cite{Schloemann2015c}, and various collaborator separations $d$, Figures~\ref{Fig:P2_localizability_withoutshadowing} and~\ref{Fig:P2_celllocalizability_withoutshadowing} show (a) the probability of unique localizability and (b) the absolute increase in these probabilities provided through device $\da$'s collaborative link with device $\db$ for range-based and range-difference-based localization, respectively. We note that for both $\ell=2$ and $\ell=3$, the approximation in~\eqref{Eq:PLc3} is within an absolute error of 0.02 from the truth, which is gathered via simulation. Furthermore, we see that the benefit from collaboration is a non-monotonic function of $\threshold$. For range-based positioning, $\threshold = -9$~dB appears to be a \emph{sweet spot} which maximizes the collaborative benefit, providing an approximately 6\% to 22\% absolute increase in the probabilities of unique localizability for the values of $d$ considered. For range-difference-based positioning, there appears to be a range of $\threshold$ values over which the benefit from collaboration is maximized, while overall, the benefit is reduced compared to that of range-based positioning. Interestingly, this range from approximately $\threshold = -10$ to $-14$~dB (highlighted in the figures with dotted lines) is right in line with the SINR thresholds discussed in 3GPP for the hearability of farther away BSs~\cite{R1-091912, Fischer2014}. These results bode well for the use of small-scale device-to-device collaborative ranging as a means to combat the hearability problem and improve the localizability of MDs in cellular networks. For example, when $\threshold=-12$~dB, an MD collaborating with another MD $d=150$m away observes an 8\% absolute increase in its unique localizability probability, a relative improvement of nearly 40\% from the 21\% in the noncollaborative case. Despite this significant relative improvement, an overall localizability probability of 29\% is still not acceptable for cellular positioning. Randomization due to the presence of shadowing, which is a more applicable scenario for cellular positioning, may actually help matters, as discussed next.

\subsection{The impact of shadowing on localizability}\label{Sec:P2:ImpactShadowing}

\ifx\sidebysidefigures\undefined

\begin{figure}
\centering
\subfloat[Range-based positioning]{\label{Fig:P2_localizability_withshadowing_vs_d}\includegraphics[width=\figurewidth]{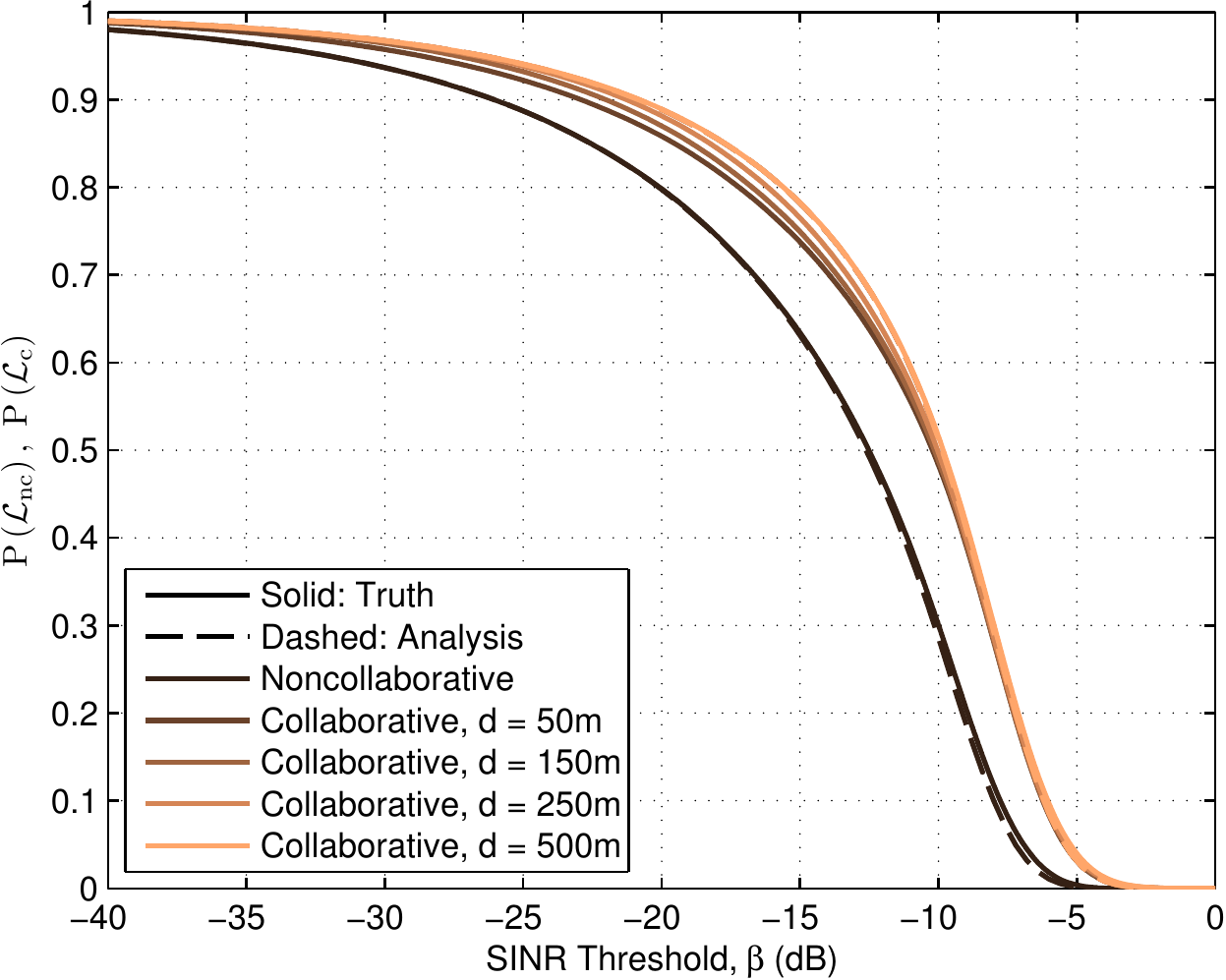}} \\
\subfloat[Range-difference-based positioning]{\label{Fig:P2_celllocalizability_withshadowing_vs_d}\includegraphics[width=\figurewidth,keepaspectratio]{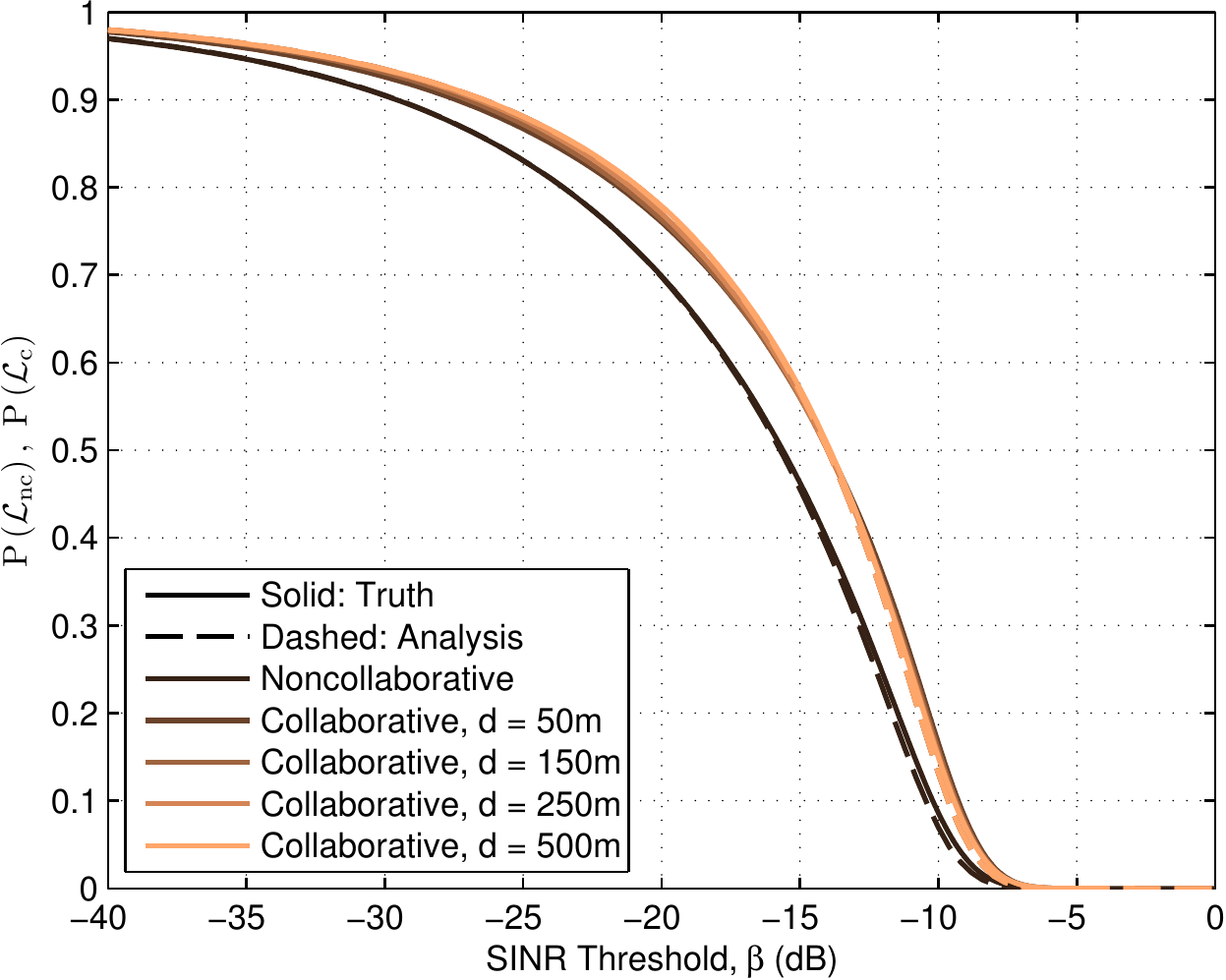}}
\caption{\textsc{The impact of shadowing} on the probability of unique localizability for various collaborator separations $d$. Due to the approximation in \eqref{Eq:P2:PLcShad}, the collaborative analysis lines overlap completely. Comparing with Figures~\ref{Fig:P2_localizability_withoutshadowing} and~\ref{Fig:P2_celllocalizability_withoutshadowing}, it is clear that shadowing significantly increases the benefit of collaboration, especially for smaller values of $d$. ($\alpha=4$.)}
\label{Fig:P2_withshadowing}
\end{figure}

\else

\begin{figure}
\centering
\subfloat[Range-based positioning]{\label{Fig:P2_localizability_withshadowing_vs_d}\includegraphics[width=\figurewidth]{P2_localizability_withshadowing_vs_d}} \qquad
\subfloat[Range-difference-based positioning]{\label{Fig:P2_celllocalizability_withshadowing_vs_d}\includegraphics[width=\figurewidth]{P2_celllocalizability_withshadowing_vs_d}}
\caption{\textsc{The impact of shadowing} on the probability of unique localizability for various collaborator separations $d$. Due to the approximation in \eqref{Eq:P2:PLcShad}, the collaborative analysis lines overlap completely. Comparing with Figures~\ref{Fig:P2_localizability_withoutshadowing} and~\ref{Fig:P2_celllocalizability_withoutshadowing}, it is clear that shadowing significantly increases the benefit of collaboration, especially for smaller values of $d$. ($\alpha=4$.)}
\label{Fig:P2_withshadowing}
\end{figure}

\fi

\begin{figure}
\centering
\includegraphics[width=\figurewidth]{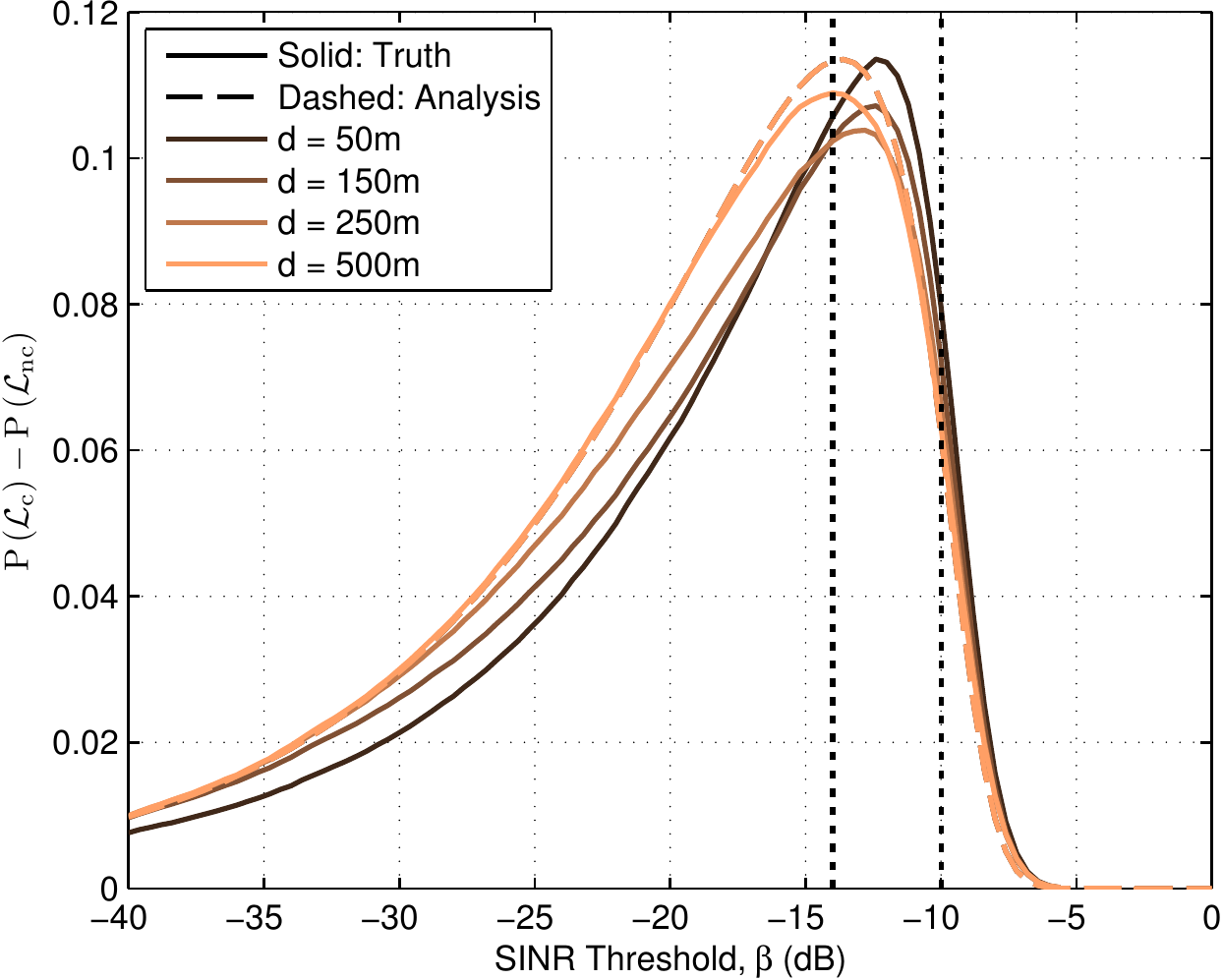}
\caption{\textsc{The improvement due to shadowing} on the probability of unique localizability for range-difference-based positioning. Here, the benefit from collaboration peaks at exactly the SINR thresholds which have been considered in 3GPP for far away BSs, between $\threshold=-10$ to $-14$~dB as highlighted by the dotted lines. ($\alpha=4$.)}
\label{Fig:P2_celllocalizability_improvement_withshadowing_vs_d}
\end{figure}

At this time, we consider the impact of shadowing using log-normal shadowing with $\sigma_s = 8$~dB and a correlation of 0.5 between the signals received at two devices originating from the same BS. Recall that~\eqref{Eq:P2:PLcShad} was derived using some very simplifying assumptions, including that two devices each hearing $\ell$ BSs certainly hear at least $\ell+1$ unique BSs, independent of the separation distance. Figure~\ref{Fig:P2_withshadowing} reveals that this was, in fact, not a bad assumption. For both range-based and range-difference-based positioning, the separation between the devices plays a highly-reduced role in the localizability probability compared to its role in the no shadowing case. Moreover, it is observed that the presence of shadowing is quite beneficial, primarily for increasing the probability that collaboration will improve localizability for shorter device separations $d$. This is more clearly observed in Figure~\ref{Fig:P2_celllocalizability_improvement_withshadowing_vs_d}, which plots exactly this improvement in localizability using range-difference observations from BSs for various different separations. For all values of $d$, the benefit from collaboration is very similar, which is quite different from the no shadowing case considered before, where the benefit clearly grew along with the collaborators' separation. Note that the peak localizability benefits, in the vicinity of 10\% to 11.5\%, are once again obtained between $\threshold = -10$ and $-14$~dB, the range of values commonly considered for $\threshold$ in 3GPP. Revisiting the example from the previous section ($\threshold=-12$~dB, $d=150$m), we see that shadowing has \emph{further} improved the probability of unique localizability in the collaborative case from 29\% to 32\%. While these results are promising and any improvement is welcome, it is clear that the additional collaborative link is not sufficient to provide truly-reliable localizability performance, at least not without frequency reuse, which we consider next.

\subsection{The impact of frequency reuse on cellular localizability}

\ifx\sidebysidefigures\undefined

\begin{figure}
\centering
\subfloat[Probability of unique localizability]{\label{Fig:P2_celllocalizability_withshadowing_andK_vs_d}\includegraphics[width=\figurewidth]{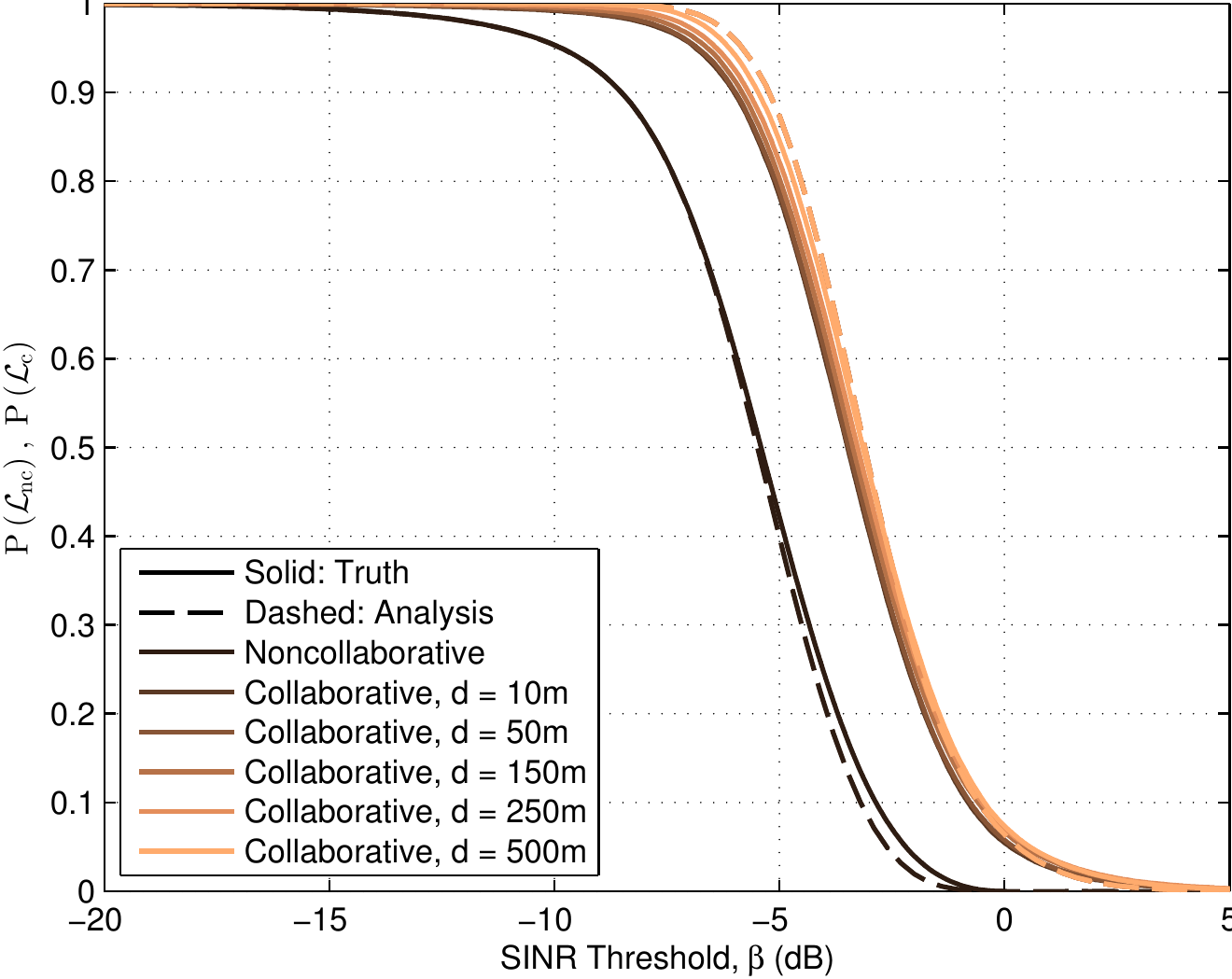}} \\
\subfloat[Absolute increase in unique localizability probability]{\label{Fig:P2_celllocalizability_improvement_withshadowing_andK_vs_d}\includegraphics[width=\figurewidth,keepaspectratio]{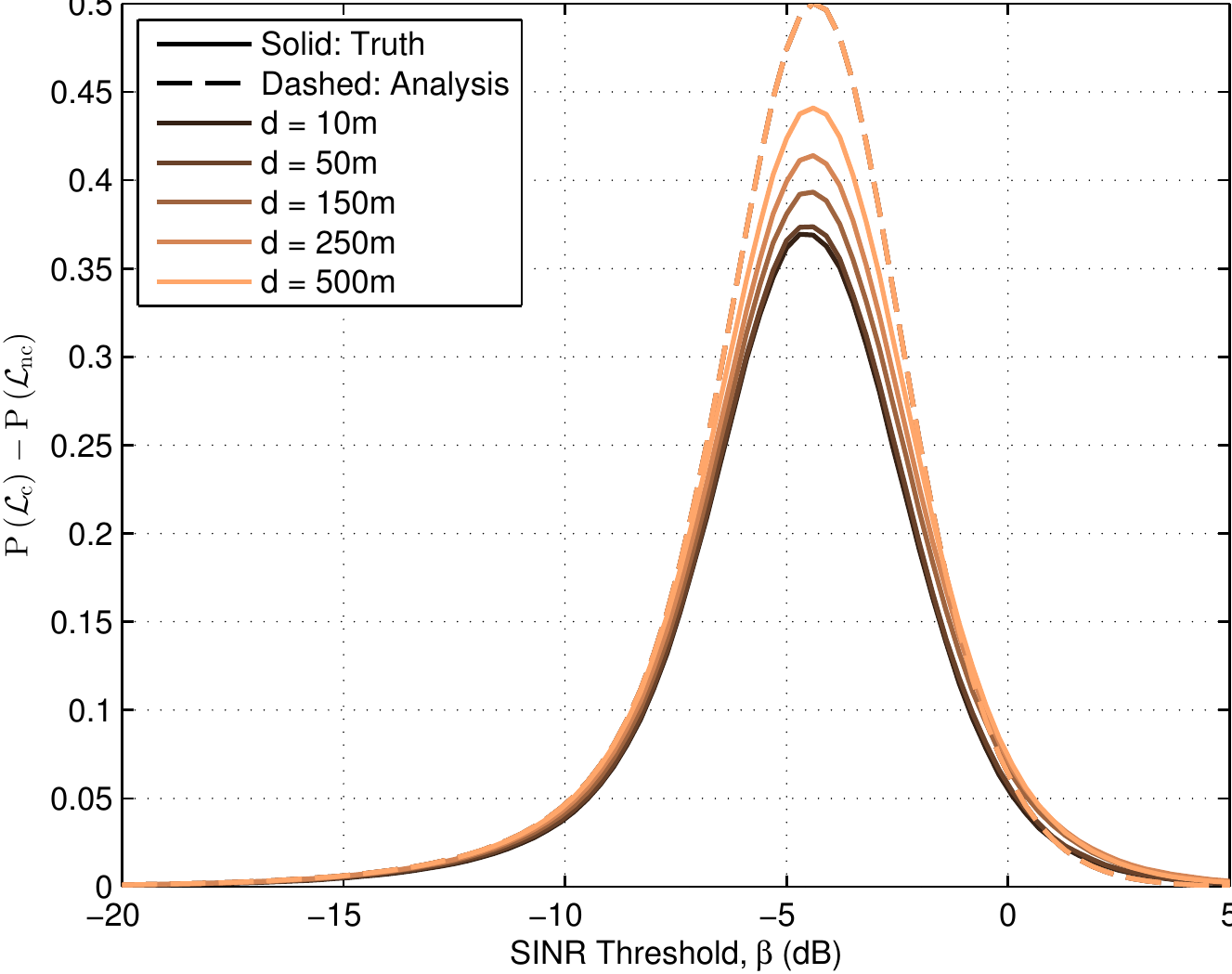}}
\caption{\textsc{Unique localizability} in range-difference-based positioning (e.g., using TDOA observations from BSs) for various collaborator separations $d$. The separation plays a major role in how beneficial collaboration will be. Note that the dotted lines at $\beta=-10$ and $-14$~dB delineate the range of SINR threshold values considered in 3GPP for far away BSs, which coincides with the peak collaborative benefits. ($\alpha=4$.)}
\label{Fig:P2_celllocalizability_withshadowing_andK}
\end{figure}

\else

\begin{figure}
\centering
\subfloat[Probability of unique localizability]{\label{Fig:P2_celllocalizability_withshadowing_andK_vs_d}\includegraphics[width=\figurewidth]{P2_celllocalizability_withshadowing_andK_vs_d}} \qquad
\subfloat[Absolute increase in unique localizability probability]{\label{Fig:P2_celllocalizability_improvement_withshadowing_andK_vs_d}\includegraphics[width=\figurewidth]{P2_celllocalizability_improvement_withshadowing_andK_vs_d}}
\caption{\textsc{The impact of frequency reuse} on unique localizability in range-difference-based positioning for various collaborator separations $d$ and a reuse factor of $K=3$. We note that frequency reuse greatly enhances the benefit of collaboration (even at very short $d=10$m communication ranges), which in turn may allow the employment of lower frequency reuse factors in cellular positioning. ($\alpha=4$.)}
\label{Fig:P2_celllocalizability_withshadowing_andK}
\end{figure}

\fi

To conclude our analysis, we lastly consider how \emph{frequency reuse}, commonly included in wireless standards~\cite{3GPP.36.305}, affects the value of collaboration for improving the probability of unique localizability. If a total of $K$ frequency bands are available and we independently assign one of the bands to each $x \in \PPPa$ with equal probability, we can easily incorporate frequency reuse into our model by considering the transmission activity on each band separately using independent BS PPPs whose densities are that of the original BS PPP thinned by the frequency reuse factor $K$. We consider a cellular positioning setup in which devices measure range-difference observations to BSs (e.g., OTDOA in LTE) and include correlated log-normal shadowing using the same parameter values as in Section~\ref{Sec:P2:ImpactShadowing}. If the number of hearable BSs in the $k\th$ band at some device $\nbz$ is $n_{\nbz}^k$, then the total number of BSs hearable at device $\nbz$ is $\nbbN_\nbz^{\tt K} = \sum_{k=1}^K n_{\nbz}^k$. Following the same logic as was employed in Section~\ref{Sec:P2:ShadowingCase} and letting $\ell=3$, we obtain $\P(\Lnc) = \P(\nbbN_\da^{\tt K} \geq 4)$ and $\P\left(\Lc\right) \approx \P\left(\nbbN_\da^{\tt K} \geq 4\right) + \P\left(\nbbN_\da^{\tt K} = 3\right)\P\left(\nbbN_\db^{\tt K} \geq 3\right)$ as the revised expressions of \eqref{Eq:P2:PLnc} and \eqref{Eq:PLc3} for this frequency reuse setup.\!\footnote{Using random frequency reuse, $\P(\nbbN_\nbz^{\tt K} \geq L) = \plk(1,1,\alpha,\threshold,1,\pppai)$ in Theorem~3 of~\cite{Schloemann2015c}.} For a frequency reuse factor of $K=3$, which pertains to OTDOA positioning using cell-specific reference signals (CRS) in LTE, the probabilities of unique localizability are plotted in Figure~\ref{Fig:P2_celllocalizability_withshadowing_andK}. While it is immediately clear that localizability has already improved drastically across the entire SINR range for noncollaborative positioning compared to its universal frequency reuse $K=1$ counterpart in Figure~\ref{Fig:P2_celllocalizability_withshadowing_vs_d}, collaboration adds an additional drastic improvement to the probability of unique localizability in this cellular scenario. In fact, collaboration appears to have a type of \emph{processing gain effect} allowing for $\threshold$ values between 2 to 5~dB higher than in noncollaborative positioning in order to achieve the same probability of unique localizability. Remarkably, this is true even when the collaborating devices are separated by only $d=10$m. Although the peak benefits are not in the $\beta = -10$ to $-14$~dB range anymore, we make a final remark here that the benefit from collaboration is sufficient to remove location ambiguities completely within this range. This could be very significant to cellular network operators, potentially making it possible to meet FCC E911 requirements using the CRS and fewer frequency bands than the current $K=6$ with positioning reference signals (PRS), which was deemed impossible for noncollaborative cellular positioning~\cite{R1-091912}.

\section{Conclusion}

In this paper, we presented a tractable analysis of the impact of a single collaborative link on the probability of a mobile device being able to locate itself without ambiguity (i.e., being uniquely localizable). This is in contrast to previous works, which have primarily relied on simulations to study collaborative positioning in similar setups that include network self-interference. In the absence of shadowing and all other things being equal, the results show that collaboration is more beneficial to range-based than range-difference-based positioning systems. This makes logical sense since for two devices separated by some fixed distance, the sets of the two closest BSs to each device are less likely to be identical than the sets of the three closest BSs to each device, thus providing a greater likelihood of increasing the number of unique combined BSs. For both types of systems, it quickly becomes apparent that the key element affecting the value of collaboration is the separation between the devices; however, this is promptly deemphasized in the presence of shadowing. Although shadowing helps, the results make it clear that collaboration will not be sufficient to notably mitigate the hearability problem in cellular positioning systems employing universal frequency reuse. The localizability benefit from collaboration is greatly enhanced with frequency reuse, and it is seen that for a positioning system similar to that of OTDOA in LTE using CRS, a single collaborative link is sufficient to ensure unique localizability at the SINR detection thresholds commonly considered in 3GPP for cellular positioning (around -10~dB and below). These results are significant and demonstrate that short-distance small-scale collaboration, which is the most reasonable scenario for cellular networks, is a very worthwhile pursuit and likely to be a significant aid in the fight against the \emph{hearability problem}. In order to understand this further, we suggest that future work consider the exact localizability analysis in the shadowing case as well as present techniques for accurate device-to-device ranging.

\appendix

\subsection{Proof of Lemma~\ref{Lemma:sameLth}}\label{Proof:sameLth}

\begin{figure}
\centering
\includegraphics[width=0.8\figurewidth]{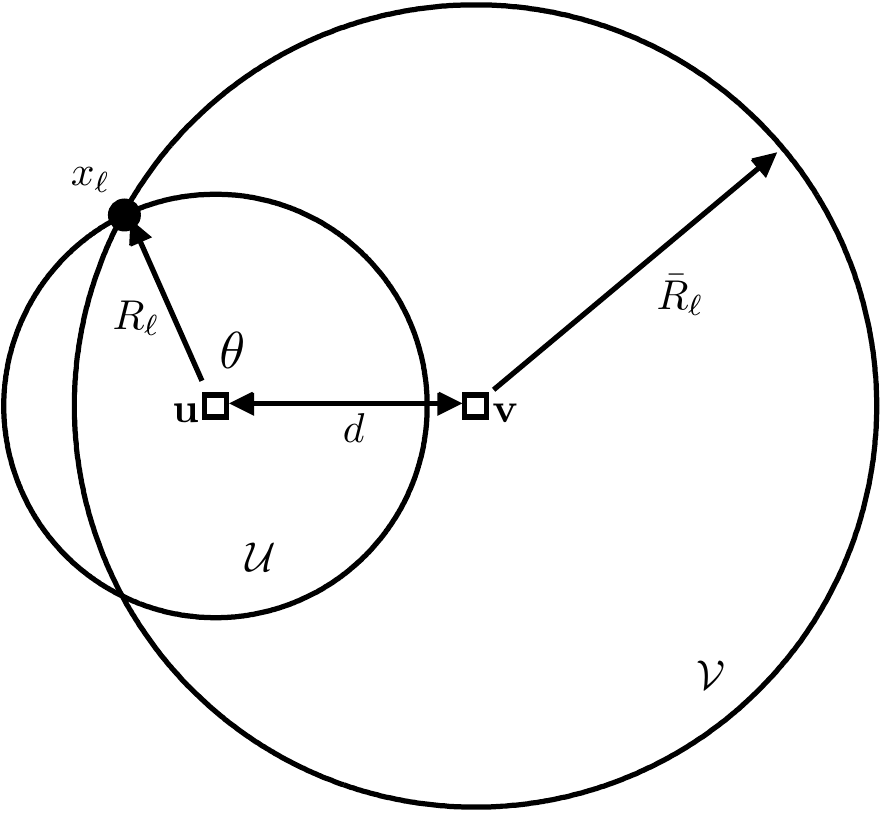}
\caption{\textsc{The setup of Lemma~\ref{Lemma:sameLth}}. This figure illustrates the setup used in the proof of Lemma~\ref{Lemma:sameLth} and highlights its key variables.}
\label{Fig:P2_sameLth}
\end{figure}

First, we define the variables used in the derivation, which are also highlighted in Figure~\ref{Fig:P2_sameLth}. As usual, let $\da$ and $\db$ represent two MDs separated by distance $d$. The location of the $\ell\th$ closest BS to $\da$ is $x_\ell$ and its distances to $\da$ and $\db$ are the random variables $R_\ell$ and $\bRL$, respectively. Next, $\ncalU$ is the region covered by the circle centered at $\da$ with radius $R_\ell$ and $\ncalV$ is the region covered by the circle centered at $\db$ with radius $\bRL$. Lastly, $\theta$ is a random variable which represents the angle of $x_\ell$ relative to the baseline connecting $\da$ and $\db$, i.e., $x_\ell = R_\ell \cdot [\cos \theta\ \sin \theta]^\T$. Now,
\begin{align*}
\P\left(\seta{\ell}\right. &= \left.\setb{\ell}, \seta{\ell-1} = \setb{\ell-1}  \middle\vert  D=d\right) \notag\\
&\sreq{a} \E_{R_\ell}\left[\E_\theta\left[\P\left(\seta{\ell} = \setb{\ell}, \seta{\ell-1} = \setb{\ell-1}  \middle\vert  d, R_\ell, \theta\right)\right]\right] \notag\\
&\sreq{b} \E_{R_\ell}\left[\E_\theta\left[\P\left(\ncalN_{\ncalU \cap \ncalV} = \ell, \ncalN_{\ncalV \backslash (\ncalU \cap \ncalV)} = 0  \middle\vert  d, R_\ell, \theta\right)\right]\right] \notag\\
&\sreq{c} \E_{R_\ell}\left[\E_\theta\left[\P\left(\ncalN_{\ncalU \cap \ncalV} = \ell  \middle\vert  d, R_\ell, \theta\right) \P\left(\ncalN_{\ncalV \backslash (\ncalU \cap \ncalV)} = 0  \middle\vert  d, R_\ell, \theta\right)\right]\right] \notag\\
&\sreq{d} \E_{R_\ell}\left[\E_\theta\left[\left(\frac{\vert\ncalU \cap \ncalV\vert}{\vert\ncalU\vert}\right)^{\ell-1} e^{-\pppai\vert\ncalV \backslash (\ncalU \cap \ncalV)\vert} \right]\right] \notag\\
&= \int_0^\infty \int_0^{2\pi} \left(\frac{\vert\ncalU \cap \ncalV\vert}{\vert\ncalU\vert}\right)^{\ell-1} e^{-\pppai\vert\ncalV \backslash (\ncalU \cap \ncalV)\vert} f_{\theta}(\left.\theta \middle\vert r\right) f_{R_\ell}(r;\ell,\pppai)\ \d\theta\ \d{r} \notag\\
&\sreq{e} \frac{1}{\pi} \int_0^\infty \int_0^{2\pi} \left(\frac{\vert\ncalU \cap \ncalV\vert}{\vert\ncalU\vert}\right)^{\ell-1} e^{-\pppai(\vert\ncalV \backslash (\ncalU \cap \ncalV)\vert+\pi r^2)} \frac{(\pppai \pi r^2)^\ell}{r \Gammal}\ \d\theta\ \d{r} \notag\\
&\sreq{f} \frac{1}{\pi} \int_0^\infty \int_0^{2\pi} \left(\frac{\pi r^2-\Alune(r, \bL, d)}{\pi r^2}\right)^{\ell-1} \notag \\
&\qquad \times e^{-\pppai(\Alune(\bL, r, d)+\pi r^2)} \frac{(\pppai \pi r^2)^\ell}{r \Gammal}\ \d\theta\ \d{r} \notag\\
&\sreq{g} \frac{2}{\pi\Gammal} \int_0^\infty \int_0^{\pi} \left(\frac{\pi r^2-\Alune(r, \bL, d)}{\pi r^2}\right)^{\ell-1} \notag \\
&\qquad \times e^{-\pppai(\Alune(\bL, r, d)+\pi r^2)} \frac{(\pppai \pi r^2)^\ell}{r}\ \d\theta\ \d{r},
\end{align*}
where $(a)$ follows from fixing the location of $x_\ell$, i.e., fixing both $R_\ell$ and $\theta$ and taking the expectation over their distributions, $(b)$ follows from the fact that, conditioned on $x_{\ell}$, for $\seta{\ell} = \setb{\ell}$ and $\seta{\ell-1} = \setb{\ell-1}$, there must be total of $\ell$ BSs in both $\ncalU \cap \ncalV$ and $\ncalU \cup \ncalV$, $(c)$ follows from the independence of regions $\ncalU \cap \ncalV$ and $\ncalV \backslash (\ncalU \cap \ncalV)$, $(d)$ follows from calculating the probability that the $\ell-1$ BSs \emph{known to be inside} $\ncalU$ are all in $\ncalU \cap \ncalV$ and calculating the void probability of $\ncalV \backslash (\ncalU \cap \ncalV)$, 
$(e)$ follows from $f_\theta\left(\theta \middle\vert r\right) = f_\theta\left(\theta\right) = 1/2\pi$ and, just like \eqref{Eq:P2:D_K},
\begin{align}
	f_{R_\ell}(r; \ell, \pppai)
	&= e^{-\pppai \pi r^2 } \frac{2 (\pppai \pi r^2)^\ell}{r \Gammal}, \label{Eq:P2:R_ell}
\end{align}
$(f)$ follows from
\begin{align}
\vert\ncalU\vert &= \pi R_{\ell}^2, \label{Eq:AreaA}\\
\vert\ncalU \cap \ncalV\vert &= \pi R_{\ell}^2-\Alune(R_{\ell}, \bRL, d), \label{Eq:AreaAintB}\\
\vert\ncalV \backslash (\ncalU \cap \ncalV)\vert &= \Alune(\bRL, R_{\ell}, d), \label{Eq:AreaBnotAintB}
\end{align}
where $R_\ell = r$ and $\bRL = \bL$ here, and $(g)$ follows from multiplying by 2 and halving the integration limits of $\theta$ due to symmetry.

\subsection{Proof of Lemma~\ref{Lemma:diffLth}}\label{Proof:diffLth}

\begin{figure}
\centering
\includegraphics[width=0.8\figurewidth]{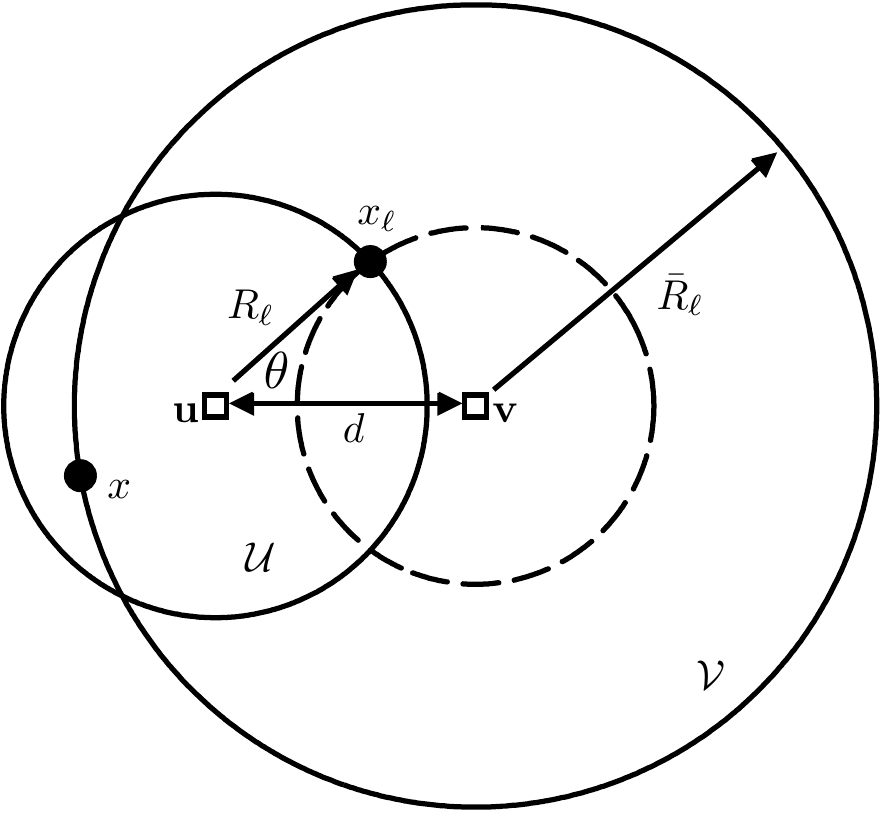}
\caption{\textsc{The setup of Lemma~\ref{Lemma:diffLth}}. This figure illustrates the setup used in the proof of Lemma~\ref{Lemma:diffLth} and highlights its key variables.}
\label{Fig:P2_diffLth}
\end{figure}

Again, we first define the variables used in the derivation, which are also highlighted in Figure~\ref{Fig:P2_diffLth}. As before, let $\da$ and $\db$ represent two MDs separated by distance $d$. The location of the $\ell\th$ closest BS to $\da$ is $x_\ell$ and its distance to $\da$ is $R_\ell$. Let $\nbX = \{\ptx_1, \ldots, \ptx_{\ell-1}\}$ represent the unordered set of $\ell-1$ closest BSs to $\da$ and $\setdptxtob = \{\dptxtob_1, \ldots, \dptxtob_{\ell-1}\}$ their corresponding distances to $\db$. Let $x \in \nbX$ be the location of the $\ell\th$ closest BS to $\db$ and $\bRL \in \setdptxtob$ the corresponding distance between the two. Next, $\ncalU$ is the region covered by the circle centered at $\da$ with radius $R_\ell$ and $\ncalV$ is the region covered by the circle centered at $\db$ with radius $\bRL$. The random variable $\theta$ represents the angle of $x_\ell$ relative to the baseline connecting $\da$ and $\db$. Now,
\begin{align*}
\P\left(\seta{\ell}\right. &= \left.\setb{\ell}, \seta{\ell-1} \neq \setb{\ell-1}  \middle\vert  D=d\right) \notag\\
&\sreq{a} \E_{R_{\ell}}\left[\E_\theta\left[\P\left(\seta{\ell} = \setb{\ell}, \seta{\ell-1} \neq \setb{\ell-1}  \middle\vert  d, R_{\ell}, \theta\right)\right]\right] \notag\\
&\sreq{b} \E_{R_{\ell}}\left[\E_\theta\left[ \sum_{i=1}^{\ell-1} \P\left(\dptxtob_i = \max\{\setdptxtob\}, \ncalN_\ncalV = \ell  \middle\vert  d, R_{\ell}, \theta\right)\right]\right] \notag\\
&\sreq{c} \E_{R_{\ell}}\left[\E_\theta\left[ \sum_{i=1}^{\ell-1} \E_{\dptxtob_i}\left[\E_{\phi_i}\left[ \P\left(\dptxtob_i = \max\{\setdptxtob\}, \ncalN_\ncalV = \ell  \middle\vert  d, R_{\ell}, \theta, \dptxtob_i, \phi_i\right)\right]\right]\right]\right] \notag\\
&\sreq{d} \E_{R_{\ell}}\left[\E_\theta\left[ \sum_{i=1}^{\ell-1}
\int_{\bLx}^{d+R_{\ell}} \int_{\pi - \cos^{-1}\!\left(\frac{d^2 + x^2 - \ell^2}{2 \cdot d \cdot x}\right)}^{\pi + \cos^{-1}\!\left(\frac{d^2 + x^2 - \ell^2}{2 \cdot d \cdot x}\right)} \left( \frac{\vert\ncalU \cap \ncalV\vert}{\vert\ncalU\vert} \right)^{\ell-2} \right.\right.\notag \\
&\qquad \times \left.\left. e^{-\pppai \vert\ncalV \backslash (\ncalU \cap \ncalV)\vert} f_\phi\left(\phi \middle\vert d,R_{\ell},x\right)f_x\left(x \middle\vert d,R_{\ell}\right)\ \d \phi\ \d x\ \right]\right] \notag\\
&\sreq{e} \E_{R_{\ell}}\left[\E_\theta\left[ \sum_{i=1}^{\ell-1}
\int_{\bLx}^{d+R_{\ell}} \left( \frac{\vert\ncalU \cap \ncalV\vert}{\vert\ncalU\vert} \right)^{\ell-2} e^{-\pppai \vert\ncalV \backslash (\ncalU \cap \ncalV)\vert} f_x\left(x \middle\vert d,R_{\ell}\right)\  \d x\ \right]\right] \notag\\
&\sreq{f} (\ell-1) \E_{R_{\ell}}\left[\E_\theta\left[\int_{\bLx}^{d+R_{\ell}} \left( \frac{\vert\ncalU \cap \ncalV\vert}{\vert\ncalU\vert} \right)^{\ell-2}\!\!\!e^{-\pppai \vert\ncalV \backslash (\ncalU \cap \ncalV)\vert} \frac{\phirange(d, R_\ell, x)\,x}{\vert\ncalU\vert}\  \d x\ \right]\right] \notag\\
&\sreq{g} \frac{\ell-1}{2\pi} \int_0^\infty \int_0^{2\pi} \int_{\bL}^{d+r} \left( \frac{\vert\ncalU \cap \ncalV\vert}{\vert\ncalU\vert} \right)^{\ell-2} \notag \\
&\qquad \times e^{-\pppai \vert\ncalV \backslash (\ncalU \cap \ncalV)\vert} \frac{\phirange(d,r,x)\,x}{\vert\ncalU\vert} f_{R_{\ell}}(r;\ell,\pppai)\  \d x\ \d \theta\ \d r \notag\\
&\sreq{h} \frac{\ell-1}{\pi} \int_0^\infty \frac{1}{\vert\ncalU\vert} \int_0^{\pi} \int_{\bL}^{d+r} \left( \frac{\vert\ncalU \cap \ncalV\vert}{\vert\ncalU\vert} \right)^{\ell-2} \notag \\
&\qquad \times e^{-\pppai \vert\ncalV \backslash (\ncalU \cap \ncalV)\vert} \phirange(d,r,x)\,x\, f_{R_{\ell}}(r;\ell,\pppai)\  \d x\ \d \theta\ \d r \notag\\
&\sreq{i} \frac{\ell-1}{\pi} \int_0^\infty \frac{1}{\pi r^2} \int_0^{\pi} \int_{\bL}^{d+r} \left( \frac{\pi r^2-\Alune(r,x,d)}{\pi r^2} \right)^{\ell-2} \notag \\
&\qquad \times e^{-\pppai \Alune(x,r,d)} \phirange(d,r,x)\,x\, f_{R_{\ell}}(r;\ell,\pppai)\  \d x\ \d \theta\ \d r, \notag\\
&\sreq{j} \frac{2(\ell-1)}{\pi\Gammal} \int_0^\infty \frac{1}{\pi r^2} \int_0^{\pi} \int_{\bL}^{d+r} \left( \frac{\pi r^2-\Alune(r,x,d)}{\pi r^2} \right)^{\ell-2} \notag \\
&\qquad \times e^{-\pppai (\Alune(x,r,d)+\pi r^2)} \phirange(d,r,x)\frac{x(\pppai \pi r^2)^\ell}{r}\  \d x\ \d \theta\ \d r,
\end{align*}
where $(a)$ follows from fixing the location of $x_\ell$, i.e., fixing both $R_\ell$ and $\theta$ and taking the expectation over their distributions, $(b)$ follows from the law of total probability by summing over the probabilities that each of the $\ell-1$ points in the interior of $\ncalU$ is the furthest point from $\db$ and that there are exactly $\ell$ points in the resulting region $\ncalV$, $(c)$ follows from fixing the location of $\ptx_i = \db + \dptxtob_i \cdot [\sin \phi_i\ \cos \phi_i]^\T$ (similarly to $(a)$) and taking the expectation over its distribution, $(d)$ follows from rewriting the two inner expectations as integrals and adjusting the integration bounds to remove values which lead to trivial zero-valued integrands, $(e)$ follows from the fact that the integration over $\phi$ is nothing less than a complete integration of its density over its entire support, $(f)$ follows from the fact that the summand is independent of the index $i$ and from substituting in the conditional density of $x$, where
\[
\phirange(d,R_{\ell},x) = 2 \cos^{-1}\!\left(\frac{d^2 + x^2 - R_{\ell}^2}{2 \cdot d \cdot x}\right)
\]
is the arc length of the perimeter of circle $\ncalV$ inside circle $\ncalU$, $(g)$ follows from rewriting the two outer expectations as integrals, $(h)$ follows from multiplying by 2 and halving the integration limits of $\theta$ due to symmetry, $(i)$ follows from \eqref{Eq:AreaA}-\eqref{Eq:AreaBnotAintB} where $R_{\ell} = r$ and $\bRL = x$ here, and $(j)$ follows from substituting in the expression of $f_{R_{\ell}}(r;\ell,\pppai)$ provided in \eqref{Eq:P2:R_ell}.

\bibliographystyle{IEEEtran}
\bibliography{paper_2}

\end{document}